\newcolumntype{P}[1]{>{\centering\arraybackslash}p{#1}}
\newtheorem{theorem}{Theorem}
\newtheorem{corollary}{Corollary}
\newtheorem{lemma}{Lemma}
\newtheorem{definition}{Definition}
\newtheorem{observation}{Observation}
\newtheorem{claim}{Claim}
\newcommand{\ssc}[1]{\textsc{\capitalisewords{\MakeLowercase{#1}}}}
\newcommand{\cI}{\mathcal{I}}
\newcommand{\OPT}{\mathsf{OPT}}
\newcommand{\SC}{\mathsf{SC}}
\newcommand{\SW}{\mathsf{SW}}
\newcommand{\CoF}{\mathsf{CoF}}
\begin{document}
\title{A Fair Allocation is Approximately Optimal for Indivisible Chores, or Is It?\thanks{The authors are ordered alphabetically. Ankang Sun is the corresponding author.}}
%
%
\author{Bo Li\thanks{Department of Computing, The Hong Kong Polytechnic University, China. comp-bo.li@polyu.edu.hk} \and
Ankang Sun\thanks{Department of Computing, The Hong Kong Polytechnic University, China. ankang.sun@polyu.edu.hk} \and
Shiji Xing\thanks{Department of Computing, The Hong Kong Polytechnic University, China. shi-ji.xing@connect.polyu.hk}}
%
%
%
\date{}
\maketitle              
\begin{abstract}

In this paper, we study the allocation of indivisible chores and consider the problem of finding a fair allocation that is approximately efficient.
We shift our attention from the multiplicative approximation to the additive one. 
Our results are twofold, with (1) bounding how the optimal social cost escalates resulting from fairness requirements and (2) presenting the hardness of approximation for the problems of finding fair allocations with the minimum social cost. 
To quantify the escalation, we introduce {\em cost of fairness} (CoF) --- an alternative to the price of fairness (PoF) --- to bound the {\em difference} (v.s. {\em ratio} for PoF) between the optimal social cost with and without fairness constraints in the worst-case instance.
We find that CoF is more informative than PoF for chores in the sense that the PoF is infinity regarding all EQX (equitable up to any item), EQ1 (equitable up to one item) and EF1 (envy-free up to one item), while the CoF is $n$ regarding EQX and 1 regarding EQ1 and EF1, where $n$ is the number of agents.
For inapproximability, we present a detailed picture of hardness of approximation. 
We prove that finding the optimal EQX allocation within an additive approximation factor of $n$ is NP-hard for any $n\ge 2$ where $n$ is the number of agents and the cost functions are normalized to 1. 
For EQ1 and EF1, the problem is NP-hard when the additive factor is a constant and $n\ge 3$.
When $n=2$, we design additive approximation schemes for EQ1 and EF1.
\end{abstract}
\section{Introduction}
The allocation of resources among multiple agents has long been a fundamental problem in various economic contexts \cite{DBLP:journals/ai/AmanatidisABFLMVW23,DBLP:books/daglib/0017730,steinhaus1948problem,DBLP:reference/choice/Thomson16}. Two orthogonal objectives, social welfare and individual fairness, used to be studied independently. 
A recent research trend focuses on understanding the trade-off between them \cite{DBLP:journals/ior/BertsimasFT11,DBLP:conf/wine/CaragiannisKKK09,DBLP:journals/mst/CaragiannisKKK12}.
There are two types of problem. 
The first one is more quantitative, where we want to understand how much efficiency would be lost due to fairness constraints.
The second one is more computational, where we want to design polynomial time
algorithms to compute the optimal fair allocation, i.e., the fair allocation with maximum social welfare. 
In this work, we align with this research trend and address these two problems concerning the allocation of indivisible chores when agents incur costs to carry out their assigned chores.
 
Two widely studied comparison-based fairness notions are {\em equitability} (EQ) \cite{DBLP:books/daglib/0017730} and  {\em envy-freeness} (EF) \cite{GS58,Varian74}.
Informally, an allocation is EQ if all agents have the same cost and EF if no agent wants to exchange their bundles with any other agent's.
When the items are not divisible, an EQ or EF allocation cannot be guaranteed to exist even in straightforward cases such that there is a single item with non-zero cost. 
Accordingly, since \cite{Budish11}, most of the focus has been on their up to one relaxations. 
In particular, an allocation is {\em equitable up to one item} (EQ1) if every agent's cost excluding one item in her bundle is no greater than any other agent's cost, and is {\em envy-free up to one item}
(EF1) if every agent's cost excluding one item in her bundle is no greater than her cost on any other agent's bundle.
A stronger relaxation is up to any, which requires EQ and EF to be satisfied by each agent after the removal of any time in her bundle, and the resulting criteria are called EQX and EFX.
It is known that EQX, EQ1 and EF1 allocations are guaranteed to exist
\cite{DBLP:journals/teco/CaragiannisKMPS19,DBLP:conf/ecai/GourvesMT14,DBLP:conf/sigecom/LiptonMMS04}, but the existence of EFX is still unknown.

To quantify the increase in social cost by restricting allocations to be fair, Caragiannis et al. \cite{DBLP:conf/wine/CaragiannisKKK09,DBLP:journals/mst/CaragiannisKKK12} introduced the concept of {\em price of fairness} (PoF), which is the worst-case ratio between the optimal social cost with and without fairness constraints. 
Their original work focused on the unrelaxed fairness criteria EQ and EF, and excluded the instances which do not admit these fair allocations.
Later, Bei et al. \cite{DBLP:journals/mst/BeiLMS21} and other followup works \cite{DBLP:conf/wine/BarmanB020,DBLP:conf/www/0037L022,DBLP:conf/atal/LiLLTT24,DBLP:journals/aamas/SunCD23,DBLP:journals/aamas/SunCD23a} studied the price of fairness regarding guaranteed fairness such as EQ1 and EF1. 
When the items are indivisible chores, it is known that there are instances whose optimal social cost is close to 0 but the social cost of any fair (in terms of all EQX, EQ1 and EF1) allocation is a constant, resulting in unbounded PoF.
These results indicate that in the context of indivisible chores, fairness and efficiency are not compatible.
Following these results, in this work, we want to understand the following problem.

\begin{quote}
    \em If we could tolerate a constant increase in social cost, supposing that the total cost of items for every agent is normalized to 1, 
    can we ensure the allocations are fair?
\end{quote}

 To investigate the above problem, we propose a new quantitative measure,
the {\em cost of fairness} (CoF), {which is} the {\em difference}, in the worst-case instances, between the optimal social cost under fairness constraints and the optimal social cost without constraints.
As we will prove in this work, CoF is more precise and revealing than PoF in terms of quantifying the efficiency loss in the context of indivisible chores.
The known results that the PoF is infinity for all EQX, EQ1, and EF1 do not answer the above question nor
provide insights into whether certain fairness criteria can be guaranteed while causing less social cost escalation than others.
In stark contrast, the CoF is $n$ regarding EQX and 1 regarding EQ1 and EF1.
This exactly answers our problem: if we could tolerate a social cost loss of 1, we can ensure the allocations are EQ1 or EF1; however, EQX cannot be ensured.

We then take a computation perspective to study the previous problem.
Among all fair allocations, we want to compute the one that is the most efficient.
This problem has been extensively studied for indivisible goods, particularly under fairness notions of EF1 and EQ1  \cite{DBLP:conf/atal/BarmanG0KN19,DBLP:journals/eor/AzizHMS23,DBLP:journals/corr/abs-2205-14296,DBLP:journals/aamas/SunCD23}.
These results show that such a problem is NP-hard and is even hard to approximate. 
For indivisible chores, although the problem is proved to be NP-hard for EQ1 and for EQX in \cite{DBLP:journals/aamas/SunCD23}, the hardness of approximation is still unknown. 
Moreover, we are not aware of any prior work that studied the problem for EF1.
Therefore, our paper also aims to fill the gap regarding the hardness of approximation.
Inspired by the additivity nature of the cost of fairness, we investigate the computational complexity of finding an approximate solution within additive factors.

\subsection{Our Contribution}
In this work, we propose to study the {\em cost of fairness} for the allocation of indivisible chores and consider three guaranteed criteria, namely, EQX, EQ1, and EF1.
On one hand, we give asymptotically tight bounds of the cost of fairness regarding all three fairness notions.
On the other hand, we consider the computation problem of finding the fair allocations with minimum social cost.
We show that all these problems are NP-hard, and are also hard to be approximated within 
either a constant or a linear additive factor. 
Our main results and their comparison to the price of fairness are shown in Table \ref{tab:main-results}.
If not stated otherwise, every agent's cost on all items is normalized to 1.
In the following, we give a detailed introduction of our main results.

\begin{table}[]
    \centering
    \caption{Main results. PoF and CoF respectively refer to price of fairness and cost of fairness. Th.$x$ points to Theorem $x$. The results on the price of fairness are from \cite{DBLP:journals/aamas/SunCD23} and \cite{DBLP:journals/aamas/SunCD23a}.
    The parameter $n$ is the number of agents and ``$< \OPT+c$'' means the problem of finding a fair allocation (EQX, EQ1, and EF1 respectively) with social cost smaller than the social cost of the optimal fair allocation plus $c$.}
    \label{tab:main-results}
    \renewcommand{\arraystretch}{2}
    \begin{tabular}{|c|c|c|c|}
    \hline
        Fairness & PoF & CoF & Minimizing Social Cost \\
        \hline
        EQX & $\infty$ & $n$ (Th. \ref{thm::cof-eqx}) & \makecell{$n\ge 2$: $< \OPT+n$ is NP-hard (Th. \ref{thm:hard-eqx-sc}) }\\
        \hline
        EQ1 & $\infty$ & 1 (Th. \ref{thm::eq1-sc-at-most-1}) & \makecell{$n\ge 3$: $< \OPT+\frac{1}{2}$ is NP-hard (Th. \ref{thm:hard-eq1-sc}) \\ $n=2$: additive approximation scheme (Th. \ref{thm:eq1-ptas}) 
        }  \\
        \hline
        EF1 & $\infty$ & 1 (Th. \ref{thm::ef1-sc-at-most-1}) & \makecell{$n\ge 3$: $< \OPT+\frac{n-1}{2n+2}$ is NP-hard (Th. \ref{thm:hard-ef1-sc}) \\ $n=2$: additive approximation scheme (Th. \ref{thm:ef1-ptas}) 
        }
        \\
        \hline
    \end{tabular}
\end{table}

\paragraph{Result 1.} 
We start with EQX allocations, and our results in this part are negative.
We first prove that the cost of fairness regarding EQX is $n$, which means that there are instances 
for which every EQX allocation results in a social cost as unfavorable as when each agent is assigned all the chores.
This result further strengthens the hard instance designed in \cite{DBLP:journals/aamas/SunCD23}, where the optimal social cost is close to zero but the EQX allocation has a small constant cost resulting the price of fairness being unbounded.
We then study the optimization problem of finding the EQX allocations with the minimum social cost.
We prove that computing an approximate solution with an additive approximation factor smaller than $n$ is NP-hard.
As each agent's cost on all items is normalized to 1,
our results are tight in the following two perspectives; (1) any EQX allocation (approximately) achieves the minimum social cost in the worst case instance,
and (2) any EQX allocation achieves the best possible additive (polynomial time) approximation. if P $\neq$ NP.


\paragraph{Result 2.} 
We next consider EQ1 allocations and our results are significantly more positive. 
In sharp contrast to EQX, we first prove that every fair allocation instance admits an EQ1 allocation (which can be found in polynomial time) whose social cost is no greater than 1 -- a single agent's total cost. 
This implies that the cost of EQ1 is upper bounded by 1.
We also prove that the bound is asymptotically tight. 
This result is noteworthy because in the worst-case instance, the cost of an allocation can be as large as $n$.
Therefore, the actual distance between the most efficient allocation and the most efficient EQ1 allocation can be relatively small, which cannot be concluded from the price of EQ, EQ1 or the price of EQX (which are  all infinite as proved in \cite{DBLP:journals/mst/CaragiannisKKK12} and \cite{DBLP:journals/aamas/SunCD23}). 
We then study the optimization problem of finding the EQ1 allocations with the minimum social cost. 
This problem has been proved to be NP-hard, even when $n=2$  \cite{DBLP:journals/aamas/SunCD23}.
We improve this result and show that for $n\ge 3$, there is no polynomial-time algorithm that ensures an additive approximation factor smaller than $\frac{1}{2}$ given P $\neq$ NP.
When $n=2$, we complement the hardness results with an additive approximation scheme\footnote{The additive approximation scheme aims to compute a solution with an absolute error of at most $\epsilon h$ in the objective where $\epsilon$ is the accuracy parameter and $h$ is a predetermined fixed parameter.
The additive approximation scheme has been applied to study load balancing problems \cite{DBLP:conf/icalp/BuchemRVW21}.}.


\paragraph{Result 3.} 
We also study EF1 allocations. 
Similar to EQ1, we prove that the asymptotically tight bound of the cost of EF1 is 1. Moreover, the fair allocation with the matched social cost
can be found in polynomial time.
Note that the price of EF and that of EF1 are infinite as proved in  \cite{DBLP:journals/mst/CaragiannisKKK12} and \cite{DBLP:journals/aamas/SunCD23}.
Regarding the optimization problem, we prove that finding an EF1 allocation with the minimum social cost is NP-hard and moreover, when $n\ge 3$, there is no polynomial-time algorithm that ensures an additive approximation factor smaller than $\frac{n-1}{2n+2}$ given P $\neq$ NP.
Note that $\frac{n-1}{2n+2}\ge \frac{1}{4}$ and goes to $\frac{1}{2}$ when $n$ goes to infinity.  
When $n=2$, we also design an additive approximation scheme.

It is worthwhile to mention that we en-route prove that for every optimization problem we consider, there is no polynomial-time algorithm having a bounded multiplicative approximation guarantee. This fact also argues in favor of using cost of fairness to investigate the fairness-efficiency trade-off for chores.


\subsection{Related Work}
\label{app:related-work}

The concept of the price of fairness draws inspiration from the extensively studied notions of the price of anarchy \cite{DBLP:conf/stacs/KoutsoupiasP99} and the price of stability \cite{DBLP:journals/siamcomp/AnshelevichDKTWR08} in game theory.
Caragiannis et al. \cite{DBLP:journals/mst/CaragiannisKKK12} first considered the price of EF, EQ and PROP (short for proportionality \cite{steinhaus1948problem}) for divisible and indivisible goods and chores.
Bei et al. \cite{DBLP:journals/mst/BeiLMS21} and followup works \cite{DBLP:conf/wine/BarmanB020,DBLP:conf/www/0037L022,DBLP:conf/atal/LiLLTT24,DBLP:journals/aamas/SunCD23,DBLP:journals/aamas/SunCD23a} propose to study the price of the up to one relaxation of these fairness notions.
Barman et al. \cite{DBLP:conf/wine/BarmanB020} also considered the price of approximate maximin share fairness, namely $\frac{1}{2}$-MMS. 
Some recent works study the price of fairness when items have graph structures \cite{DBLP:journals/iandc/HohneS21,DBLP:journals/dam/Suksompong19,DBLP:conf/ecai/SunL23,DBLP:conf/atal/Sun024}.
In the graphical setting, the paradigm of quantifying welfare loss under constraints is also applied to the requirement of connected allocation to each agent, i.e., price of connectivity 
\cite{DBLP:journals/siamdm/BeiILS22,DBLP:journals/corr/abs-2405-03467}.

In addition to social welfare, Pareto optimality (PO) is also widely studied in resource allocation to measure the efficiency of an allocation.
Caragiannis et al. \cite{DBLP:journals/teco/CaragiannisKMPS19} first observed that for indivisible goods, any allocation that maximizes the Nash welfare is EF1 and thus proved the existence of an EF1 and PO allocation. 
Later, Barman et al. \cite{DBLP:conf/sigecom/BarmanKV18} designed a pseudopolynomial time algorithm for this problem.
It is still unknown whether an EF1 and PO allocation can be found in strongly polynomial time.
For chores, the compatibility between EF1 and PO is largely unknown, except for several special cases \cite{DBLP:conf/atal/EbadianP022,DBLP:conf/aaai/GargMQ22,DBLP:journals/corr/abs-2402-17173,DBLP:conf/sigecom/0001Z023}.
A PROP1 and PO allocation exists and can be found in  strongly polynomial time for goods, chores and even a mixture of them \cite{DBLP:journals/orl/AzizMS20,DBLP:conf/aaai/BarmanK19}.
However, it is shown in \cite{DBLP:journals/orl/AzizMS20} that a PROPX allocation may not exist for goods.
Although PROPX allocations exist for chores, it may not be compatible with PO \cite{DBLP:journals/orl/AzizMS20} when the excluded item can have zero cost.
It is an open problem when the excluded item must have positive cost.
Finally, EQ1 is compatible with PO for chores but not for goods, and EQX is not for both goods and chores \cite{DBLP:conf/ijcai/FreemanSVX19,DBLP:conf/atal/FreemanSVX20}.
It is NP-hard to decide whether an instance admits an EQX (or EQ1 for goods) and PO allocation.

\section{Preliminaries}
Given a positive integer $k$, denote by $[k]$ the set $\{1,\ldots,k\}$. There is a set of indivisible items $O = \{ o_1,\ldots, o_m \}$ to be allocated to a set of agents $N = \{ 1,\ldots, n \}$. 
In this paper, we reserve $n$ and $m$ for the number of agents and items.
Each agent $i \in N$ is associated with a disutility or cost function $c_i: 2^O \rightarrow \mathbb{R}_{\geq 0}$. For simplicity, we write $c_i(o)$ to refer $c_i(\{ o \})$, i.e., the cost of $o$ for agent $i$.
It is assumed that the cost functions are normalized and additive, i.e., $c_i(O) = 1$ and for any $S\subseteq O$, $c_i(S) = \sum_{o\in S}c_i(o)$.
An instance of the allocation problem is denoted by $\cI = \langle N,O,\{c_i\}_{i\in N} \rangle$.

An allocation $A = (A_1,\ldots,A_n)$ is an ordered $n$-partition of $O$ where for any $i \neq j$, $A_i\cap A_j =\emptyset$ and $\bigcup_{i \in N} A_i = O$. 
The \emph{social cost} of an allocation is the total cost of agents. Given an allocation $A$, the social cost of $A$ is $\SC(A) = \sum_{i\in [n]} c_i(A_i)$. 
The allocation with minimum social welfare without fairness constraints is called {\em optimal}.
The allocation with the minimum social welfare among all allocations satisfying fairness criterion P is called an optimal P allocation.

In this paper, we consider two comparison-based fairness criteria, equitability (EQ) and envy-freeness (EF).
Given an allocation $A = (A_1,\ldots,A_n)$, $A$ is EQ if $v_i(A_i) = v_j(A_j)$ for all $i,j \in N$, and is EF if $v_i(A_i) \ge v_i(A_j)$ for all $i,j \in N$.
Since an EQ or EF allocation is not guaranteed to exist, we consider their up to one relaxations. 

\begin{definition}[Equitable up to Any Item]
    An allocation $A=(A_1, \ldots, A_n)$ is {\em equitable up to any item} (EQX) if for any pair of agents $i,j\in N$ and for every $o\in A_i$, $c_i(A_i \setminus \{o\}) \leq c_j(A_j)$. 
\end{definition}

\begin{definition} [Equitable up to One Item]
     An allocation $A=(A_1, \ldots, A_n)$ is {\em equitable up to one item} (EQ1) if for any agents $i,j\in N$, there exists some $o\in A_i$ such that $c_i(A_i \setminus \{o\}) \leq c_j(A_j)$. 
\end{definition}


\begin{definition}[Envy-free up to One Item]
    An allocation $A=(A_1, \ldots, A_n)$ is {\em envy-free up to one item} (EF1) if for any $i, j \in N$, there exists an item $o\in A_i$ such that $c_i(A_i\setminus\{ o \}) \leq c_i(A_j)$. 
\end{definition}




We now introduce the concept of \emph{cost of fairness} (CoF). 
Informally, given a fairness criterion, the CoF is the supreme difference, among all instances, between the minimum social cost of all fair allocations and that of all allocations.

\begin{definition}
    For a given fairness criterion $P$, the cost of fairness ($\CoF$) regarding $P$, also called the cost of $P$, is defined as 
    $$
    \sup\limits_{\cI}\min_{A\in P(\cI)} \SC(A) - \OPT(\cI),
    $$
    where $P(\cI)$ is the set of allocations satisfying $P$ and $\OPT(\cI)$ is the minimum social cost over all allocations of $\cI$.
\end{definition}

\section{The Social Cost of EQX Allocations}\label{sec:EQX}

Our results in this section are negative. 
We first prove that in the worst case example, any EQX allocation can be as bad as the ``worst allocation'' when every agent gets a cost of 1, while the social cost of optimal solution is close to 0,  implying the cost of EQX being $n$.
To surpass this obstacle, we consider the computation problem of finding the EQX allocation with the minimum cost.
Unfortunately, we prove that for any $n\geq 2$, no polynomial time algorithm ensures an EQX allocation with social cost less than $n$, unless P = NP.

\subsection{Bounding the Cost of EQX}
We first bound the cost of EQX. 
 

\begin{theorem}\label{thm::cof-eqx}
    For any $n\geq 2$, the cost of EQX is $n$.
\end{theorem}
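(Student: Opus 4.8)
The plan splits into a matching upper and lower bound. For the upper bound I would use only normalization: since $c_i$ is additive with $c_i(O)=1$, every agent's cost on its own bundle satisfies $c_i(A_i)\le c_i(O)=1$, so every allocation---in particular every EQX allocation, which is guaranteed to exist by the cited results---has $\SC(A)\le n$. As $\OPT(\cI)\ge 0$ for all instances, the quantity $\min_{A\in \mathrm{EQX}(\cI)}\SC(A)-\OPT(\cI)$ is at most $n$ for every $\cI$, hence $\CoF\le n$.

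For the lower bound I would exhibit a family $\{\cI_\delta\}_{\delta>0}$ with $\OPT(\cI_\delta)\to 0$ and $\min_{A\in\mathrm{EQX}}\SC(A)\to n$ as $\delta\to 0$; together with the upper bound this pins $\CoF=n$. The design principle, which I first isolate on the seed case $n=2$, is to use ``heavy'' items that are expensive for their owner but free for everyone else, glued together by a single cheap item. Concretely, for $n=2$ take items $\{a,b,c\}$ with $c_1=(1-\delta,\delta,0)$ and $c_2=(0,\delta,1-\delta)$ (both normalized). Routing $a$ to agent $2$, $c$ to agent $1$, and $b$ to either agent gives $\OPT\le\delta$. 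The heart of this case is a short enumeration: because $m=3>n=2$, one agent must hold two items, and the EQX requirement $c_i(A_i)-\min_{o\in A_i}c_i(o)\le c_j(A_j)$ rules out every ``cheap'' split---any split leaving one agent at cost $0$ while the other holds a free item together with $b$ is violated, and isolating $b$ forces the remaining agent to hold both heavy items, including its own. The only survivors are the two allocations in which each agent keeps its own heavy item, each of social cost $2-\delta$, so $\min_{A\in\mathrm{EQX}}\SC(A)=2-\delta$ and this instance already witnesses $\CoF\ge 2-2\delta$.

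To obtain the bound for general $n$, I would scale this gadget so that in every EQX allocation some agent is forced to hold a near-unit-cost item together with a negligible item; then removing any single item from that bundle still leaves cost $\ge 1-O(\delta)$, and the EQX inequality $c_i(A_i)-\min_{o}c_i(o)\le c_j(A_j)$ propagates this bound to \emph{every} agent, forcing $c_j(A_j)\ge 1-O(\delta)$ for all $j$ and hence $\SC\to n$. The main obstacle---and the step I expect to absorb most of the work---is ruling out the ``offloading'' escape that becomes available precisely when $n\ge 3$: since a heavy item is free for all non-owners, one can try to derange the heavy items onto non-owners and isolate the cheap glue items on the remaining agents, producing an allocation that is simultaneously cheap and EQX (this is exactly what makes the naive two-heavy-item-per-agent generalization fail). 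Preventing this for all $n$ requires a careful choice of the number and costs of the glue items, together with a scarcity/counting argument showing that no assignment can simultaneously keep every agent off its own heavy item and satisfy EQX, so that at least one agent is provably pinned to a non-removable near-unit bundle. Verifying this impossibility uniformly in $n$ is the crux; the upper bound and the cascade argument are then routine.
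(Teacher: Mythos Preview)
Your upper bound and the $n=2$ construction are correct and match the paper's strategy. The gap is in the general-$n$ lower bound: you correctly diagnose the obstacle (heavy items can be deranged onto non-owners at zero cost once $n\ge 3$) and you anticipate that a ``scarcity/counting argument'' on carefully chosen glue items is needed, but you do not actually supply the construction or the argument. As stated, the proposal for $n\ge 3$ is a plan, not a proof; the ``crux'' you flag is exactly the part that must be written out, and nothing in the proposal guarantees it can be done with a single cheap glue item or any simple extension of your $n=2$ gadget.

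The paper resolves this with the following device. Take $n$ heavy items $o^1_1,\dots,o^1_n$ where $o^1_i$ costs $K$ to agent $i$ and $1$ to everyone else, together with $n-1$ glue items $o^2_1,\dots,o^2_{n-1}$ whose costs are \emph{identical across agents} and grow geometrically: $c_i(o^2_j)=n\cdot 2^{j-1}$. Two features drive the argument. First, scarcity: there are only $n-1$ glue items, so in any allocation where no agent holds her own heavy item, at least two agents in the ``cheap'' group $N_1$ must hold some item from $O^1$ (a pigeonhole on the $n$ heavy items). Second, the geometric spacing ensures $c_i(o^2_j)>\sum_{t<j}c_i(o^2_t)+(n-1)$, so any agent holding a glue item together with anything else already violates EQX against an agent whose bundle lies entirely in $O^1$ (cost $\le n-1$). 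A short three-case analysis on whether those two $N_1$-agents also hold glue items then yields a contradiction in every case, forcing each agent to take her own heavy item and pushing $\SC$ to $nK+O(n2^n)$ while $\OPT=O(n2^n)$. Letting $K\to\infty$ and normalizing gives the bound. Your ``cascade'' intuition is morally right, but the actual mechanism is this scarcity-plus-geometric-gap combination; your write-up needs to commit to such a construction and carry out the case analysis.
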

\begin{proof}
    Since the social cost of any allocation is at most $n$ and EQX allocation always exists \cite{DBLP:conf/ecai/GourvesMT14}, the cost of EQX is at most $n$.

    As for the lower bound, let us consider an instance with $n$ agents and $2n-1$ items. The items are divided into two categories, a set $\{o^1_1,\ldots,o^1_n\}$ of $n$ items for the first category and a set $\{o^2_1,\ldots,o^2_{n-1}\}$ of $n-1$ items for the second category. For simple notations, let $O^1=\{o^1_1,\ldots,o^1_n\}$ and $O^2=\{o^2_1,\ldots,o^2_{n-1}\}$.
    The cost functions of agents are presented in Table \ref{tab:eqx-worse-case-example} where $K$ is a sufficiently large number, i.e., $K\gg n2^n$. 

    \begin{table}[ht]
  \centering
  \caption{The constructed instance for Theorem \ref{thm::cof-eqx}}
  \label{tab:eqx-worse-case-example}
    \begin{tabular}{|c|c|cccccccccccc|}
    \hline
    \multicolumn{2}{|c|}{\multirow{2}[4]{*}{$c_i(o)$}} & \multicolumn{12}{c|}{Items} \\
\cline{3-14}    \multicolumn{2}{|c|}{} & $o^1_1$ & $o^1_2$ & $o^1_3$ & $\hdots$ & $\hdots$ & $o^1_n$ & $o^2_1$ & $o^2_2$ & $o^2_3$ & $\hdots$ & $\hdots$ & $o^2_{n-1}$ \bigstrut\\
    \hline
    \multirow{6}[2]{*}{Agents} & $1$   & $K$   & $1$   & $1$   & $\hdots$ & $\hdots$ & $1$   & $n$   & $2n$  & $4n$  & $\hdots$ & $\hdots$ & $n\cdot2^{(n-2)}$\\
          & $2$   & $1$   & $K$   & $1$   & $\hdots$ & $\hdots$ & $1$   & $n$   & $2n$  & $4n$  & $\hdots$ & $\hdots$ & $n\cdot2^{(n-2)}$ \\
          & $3$   & $1$   & $1$   & $K$   & $\hdots$ & $\hdots$ & $1$   & $n$   & $2n$  & $4n$  & $\hdots$ & $\hdots$ & $n\cdot2^{(n-2)}$ \\
          & $\vdots$ & $\vdots$ & $\vdots$ & $\vdots$ & $\ddots$ &        & $\vdots$ & $\vdots$ & $\vdots$ & $\vdots$ & {$\ddots$} & & $\vdots$ \\
          & $\vdots$ & $\vdots$ & $\vdots$ & $\vdots$ &       & $\ddots$ & $\vdots$ & $\vdots$ & $\vdots$ & $\vdots$ & &        $\ddots$& $\vdots$ \\
          & $n$   & $1$   & $1$   & $1$   & $\hdots$ & $\hdots$ & $K$   & $n$   & $2n$  & $4n$  & $\hdots$ & $\hdots$  & $n\cdot2^{(n-2)}$ \\
    \hline
    \end{tabular}
\end{table}%
    
  

    For any $i\in[n]$, the total cost of items for agent $i$ is $K+n2^{n-1}-1$ and hence cost functions are normalized. 
    Note that by scaling, these cost functions can be converted to cost functions with total cost 1. For the ease of representation, throughout this paper, when presenting cost functions, we do not scale. 
    By allocating each item to the agent having the least cost for it, one can verify that
    the social cost of optimal allocation
    is equal to $n2^{n-1}$.
    We below show that any EQX allocation has a social cost of at least $nK$, 
    that is, every agent receives the item with cost $K$ for her in any EQX allocation.
    
    For a contradiction, assume an EQX allocation $A$ exists such that $\SC(A)<nK$. Then there exists at least some agent $i$ with $c_i(A_i)<K$.
    We next divide agents into two categories $N_1$ and $N_2$; for any $i\in N_1$, $c_i(A_i) <K$; for any $i\in N_2$, $c_i(A_i) \geq K$. In other words, every agent $i\in N_1$ does not receive $o^1_i$ in $A$.
    Note that $N_1\neq \emptyset$ while $N_2$ could be empty. For any agent $i\in N_2$, $c_i(A_i)=K$ (equivalent to $A_i=\{o^1_i\}$) holds; otherwise, if $c_i(A_i)>K$, it is not hard to verify that $\max_{o\in A_i}c_i(A_i\setminus\{o\}) \geq K > c_j(A_j)$ for all $j\in N_1$, violating the assumption that $A$ is EQX. 
    Therefore for any $i\in N_2$, $A_i=\{o^1_i\}$, and moreover, all $o^2_j$'s are allocated among agents $N_1$.
    
    For $N_1$, we claim that there are at least two agents $i_1, i_2 \in N_1$ such that they get items from $O^1$ in $A$.
    First, if no agent from $N_1$ gets items from $O^1$, then items in $O^1$ are assigned to agents in $N_2$. As $A_i=\{o^1_i\}$ holds for all $i\in N_2$ and $|O^1|=n$, that means every agent $i$ belongs to $N_2$, implying $N_1=\emptyset$, a contradiction.
    If only one agent from $N_1$ receives items from $O^1$ and let agent $i_1 \in N_1$ be such an agent, then item $o^1_{i_1}$ is not allocated to agent $i_1$ as $c_{i_1}(A_{i_1}) < K$.
    Recall that for any $i\in N_2$, $A_i=\{o^1_i\}$ holds. Thus, $o_{i_1}^1$ is not allocated to any agent in $N_2$, and accordingly, there must be another agent (let us say $i_2$) from $N_1$ receiving $o^1_{i_1}$.

    We next discuss whether $A_{i_1}$ or $A_{i_2}$ contains items from $O^2$. There are three possible scenarios (without loss of generality):
     \begin{enumerate}
        \item $A_{i_1} \cap O^2 = \emptyset, A_{i_2} \cap O^2 = \emptyset$;
        \item $A_{i_1} \cap O^2 = \emptyset, A_{i_2} \cap O^2 \neq \emptyset$;
        \item $A_{i_1} \cap O^2 \neq \emptyset, A_{i_2} \cap O^2 \neq \emptyset$.
    \end{enumerate}
    Before the discussion over the cases, we first note that for any agent $i\in [n]$ and for any $ o^2_j \in O^2, c_i(o^2_j) = n 2^{j-1} > \sum_{t=1}^{j-1} c_i(o^2_t) + n-1 \geq n-1$.

    Now consider the first scenario.
    Since both $i_1,i_2$ do not receive items from $O^2$, all $o^2_j$'s are allocated among the remaining $|N_1|-2$ agents in $N_1$, meaning that there exists an agent $k$ who receives at least 2 items from $O^2$. 
    By $i_1, i_2 \in N_1$, $A_{i_1}\cap O^2=\emptyset$ and $A_{i_2}\cap O^2=\emptyset$, we have $\max\{ c_{i_1}(A_{i_1}), c_{i_2}(A_{i_2})\} \leq n -1 $.
    Since $\max_{o\in A_k} c_k(A_k \setminus \{o\}) \geq n$, allocation $A$ does not satisfy EQX, a contradiction. Thus, this scenario never happens.
    
   For the second scenario, as $A_{i_1}\cap O^2=\emptyset$, we have $c_{i_1}(A_{i_1})\leq  n - 1$.
   Note that $A_{i_2} \cap O^2 \neq \emptyset$ and $A_{i_2}\cap O^1 \neq \emptyset$, then it holds that  $\max_{o\in A_{i_2}} c_{i_2}(A_{i_2} \setminus \{o\}) \geq n > c_{i_1}(A_{i_1})$, implying that $A$ is not EQX, a contradiction. Thus, this scenario never happens.
    
    Consider the last scenario. Both $i_1$ and $i_2$ receive at least one item from $O^2$. Without loss of generality, assume $c_{i_1}(A_{i_1}) > c_{i_2}(A_{i_2})$. Recall that for any $i$ and any $o^2_j\in O^2$, $c_i(o^2_j) > \sum_{t=1}^{j-1}c_i(o^2_t) + n-1$ and the total cost of 
    $O^1\cap A_{i_2}$
    for agent $i_2$ is $c_{i_2}(A_{i_2}\cap O^1) \leq n-1$. 
    We have $\max_{o\in A_{i_1}} c_{i_1}(A_{i_1}\setminus\{o\}) \geq \max_{o\in A_{i_1}}c_{i_1}(o) > c_{i_2}(A_{i_2})$ and consequently allocation $A$ is not EQX.
    We again derive a contradiction.

    All possible scenarios lead to a contradiction and thus any EQX allocation should have a social cost of at least $nK$, that is, every agent $i$ receives item $o^1_i$.
    As the cost of agents for every item in $O^2$ is identical,
    the social cost of any EQX allocation is indeed equal to $nK+n2^{n-1} - n$.
    Therefore, by scaling the total cost of every agent to 1, we can claim that the cost of EQX is at least $$
    \frac{nK-n}{K+n2^{n-1}-1} \rightarrow n \textnormal{ as } K\rightarrow +\infty,
    $$
    which yields the desired lower bound.
\end{proof}



\subsection{Computing the Optimal EQX Allocations}
We now consider the problem of minimizing social cost subject to EQX constraints, denoted as SC-EQX. Although SC-EQX is known to be NP-hard \cite{DBLP:journals/aamas/SunCD23}, the hardness of approximation has not been studied.
Below we significantly strengthen the state-of-the-art result by proving an additive inapproximability factor of $n$.

To establish the hardness result, we provide a polynomial time reduction from the well-known \ssc{Partition} problem \cite{DBLP:books/fm/GareyJ79}: 
given a set $\{ p_1,\ldots, p_r \}$ of $r$ integers whose sum is $2T$, can $[r]$ be partitioned into two indices sets $I_1$ and $I_2$ such that $\sum_{ j\in I_1} p_j = \sum_{ j\in I_2} p_j = T$?

\begin{theorem}\label{thm:hard-eqx-sc}
    For any $n\geq 2$, there is no polynomial time algorithm approximating SC-EQX within an additive factor smaller than $n$ unless P = NP.
\end{theorem}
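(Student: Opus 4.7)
The plan is to reduce from the \ssc{Partition} problem introduced before the theorem statement: given $\{p_1,\dots,p_r\}$ with $\sum_j p_j = 2T$, decide whether $[r]$ can be split into $I_1, I_2$ with $\sum_{I_1} p_j = \sum_{I_2} p_j = T$. I would build an SC-EQX instance with $n$ agents in which, after normalizing $c_i(O)=1$, the optimum EQX social cost differs between yes and no Partition instances by at least $n - o(1)$. Any polynomial-time additive approximation with factor strictly smaller than $n$ would then decide Partition, contradicting P $\neq$ NP.

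\textbf{Construction.} Reuse the personal items from Theorem~\ref{thm::cof-eqx}: for each agent $i$, add $o^1_i$ with $c_i(o^1_i) = K$ and $c_j(o^1_i) = 1$ for $j \neq i$, where $K$ is a large parameter chosen last. In place of the powers-of-two items $O^2$, add Partition items $a_1,\dots,a_r$ together with a few auxiliary items tuned so that (a) two designated \emph{active} agents evaluate each $a_j$ at $p_j$; (b) the $n-2$ remaining \emph{dummy} agents are pinned to a fixed bundle through items that only they can afford; (c) the total $c_i(O)$ is identical across $i$, so normalization is a single global scalar. The $a_j$'s take over the rigidity role of the $O^2$ items, with the twist that their EQX-feasible distributability is equivalent to the Partition question itself.

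\textbf{Two-sided analysis.} In the yes case, an explicit low-cost EQX allocation is obtained by giving each $o^1_i$ to some agent $j \neq i$ at unit cost, splitting the $a_j$'s between the two active agents along the balanced partition $(I_1,I_2)$ so both active bundles have cost exactly $T$, and giving each dummy her dedicated item; one checks every pairwise EQX inequality and bounds the resulting social cost by $O(n+2T)$. In the no case I would adapt the contradiction argument of Theorem~\ref{thm::cof-eqx}: assuming some agent does not receive her own $o^1_i$, split the agents into those that do and those that do not, and case-analyse how both the $o^1_j$'s and the $a_j$'s are distributed (paralleling the three scenarios on pairs $i_1,i_2$ in the proof of Theorem~\ref{thm::cof-eqx}). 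In each case EQX forces the two active agents' $p$-weights to coincide, which is impossible under the no hypothesis; hence every EQX allocation must assign $o^1_i$ to $i$ for every $i$, pushing the social cost to at least $nK - O(1)$.

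\textbf{Closing the gap and main obstacle.} Dividing all costs by the common total $c_i(O) = \Theta(K)$ yields EQX optimum $o(1)$ in the yes case and at least $n - o(1)$ in the no case; for any prescribed $c<n$, choose $K$ large enough that the normalized gap exceeds $c$, so a hypothetical additive-$c$ polynomial-time algorithm would decide Partition. The principal technical obstacle is the no-case case analysis: since the $p_j$'s lack the strongly super-increasing separation $c_i(o^2_j) > \sum_{t<j} c_i(o^2_t) + n - 1$ exploited in the proof of Theorem~\ref{thm::cof-eqx}, the auxiliary items and dummy-agent costs must be chosen carefully so that every \emph{hybrid} allocation---in which only a proper subset of agents receive their own $o^1_i$, or in which dummies participate in absorbing Partition items---still violates EQX whenever the $p_j$'s do not balance exactly.
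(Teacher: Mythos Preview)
Your high-level reduction strategy (create a Partition-indexed family of SC-EQX instances with a normalized yes/no gap approaching $n$) matches the paper, but the proposal stops short of a proof: the ``few auxiliary items'' and the dummy-agent costs are never specified, and you yourself flag the no-case analysis as the principal obstacle without resolving it. That obstacle is real. With the cost-$1$ off-diagonal choice $c_j(o^1_i)=1$ inherited from Theorem~\ref{thm::cof-eqx}, the EQX forcing argument there depended essentially on the super-increasing separation $c_i(o^2_j)>\sum_{t<j}c_i(o^2_t)+n-1$; once the $O^2$ items are replaced by arbitrary Partition weights $p_j$, nothing in your sketch prevents hybrid EQX allocations in which only some agents hold their own $o^1_i$ while others absorb mixtures of $o^1$-items and $a_j$'s. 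Saying that the auxiliary items ``must be chosen carefully'' is precisely the missing content.

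The paper's construction diverges from yours in three ways that together eliminate this obstacle. First, it sets $c_j(o^1_i)=0$ for $j\neq i$ rather than $1$; since EQX must survive removal of a zero-cost item, any agent holding a foreign $o^1$-item is forced to have \emph{exactly} the same cost as every other such agent, which is a much sharper constraint than the approximate equality you would get with unit costs. Second, it introduces a separate structural layer $O^3$ of $n$ items with exponentially separated costs ($c_i(o^3_j)=10n^{i-1}T$ for $j\neq i$, and $4T$ or $5T$ on the diagonal) that pins each agent $i$ to $o^3_i$ and sets a common target cost of $5T$; the Partition items $O^2$ are then made prohibitively expensive for agents $i\geq 3$ so they must land on agents $1,2$, and EQX forces $c_1(B_1)=c_2(B_2)=5T$, which is exactly the balanced-partition condition. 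Third, the no-case argument is not a single case split but an \emph{inductive peeling}: show at least one agent holds her big item (else the $5T$ targets collapse to a Partition solution), remove her, and repeat on the reduced instance until all $n$ agents are shown to hold their big items. None of these three ingredients appears in your plan, and the second and third are what actually close the gap you identified.
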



\begin{proof}
    We present the reduction from the \ssc{partition} Problem. Given an arbitrary instance of \ssc{partition} $\{ p_1,\ldots, p_r \}$, we construct a fair division $\cI$ with $n$ agents and $2n+r$ items.
    Items are divided into three categories, labelled by the superscript. 
    Let $O^1=\{o^1_1,\ldots,o^1_n\}$ be the set of the $n$ items belong to first category; $O^2=\{o^2_1,\ldots,o^2_r\}$ be the set of the $r$ items belong to the second category;
    $O^3=\{o^3_1,\ldots,o^3_n\}$ be the set of $n$ items belong to the third category. Agents' cost functions are described as follows:
    \begin{itemize}
        \item for $\{o^1_j\}$: every agent $i$ has cost $K-x_i$ for $o^1_i$ and cost 0 for each of other items;
        \item for $\{o^2_j\}$: for every agent $i\in \{1,2\}$, she has cost $p_j$ for $o^2_j$; for every agent $i\geq 3$, her cost for every $o^2_j$ is equal to $10r^{i-2}n^{(n+i-2)}T$;
        \item for $\{o^3_j\}$: for any $i\in \{1,2\}$, she has cost $4T$ for $o^3_i$ and cost $10n^{i-1}T$ for each of other items;
        for agent $i\geq 3$, she has cost $5T$ for $o^3_i$ and cost $10n^{i-1}T$ for each of other items.
    \end{itemize}
    In the construction, $K$ is a sufficiently large number, satisfying $K\gg T\cdot n^{2n}\cdot r^n$ and for any $i$, $x_i = c_i(O^2\cup O^3)$.
    The total cost of every agent is equal to $K$ and hence cost functions are normalized.
    The cost functions of agents are also presented in Table \ref{tab:eqx-hardness}.
    \begin{sidewaystable}[p]
  \centering
  \caption{The constructed instance for Theorem \ref{thm:hard-eqx-sc}}
  \label{tab:eqx-hardness}
    \begin{tabular}{|c|c|cccccc|}
    \hline
    \multicolumn{2}{|c|}{\multirow{2}[4]{*}{$c_i(o)$ for $O^1$}} & \multicolumn{6}{c|}{Items in $O^1$} \\
\cline{3-8}    \multicolumn{2}{|c|}{} & $o^1_1$ & $ o^1_2$ & $o^1_3$ & $\cdots$ & $o^1_{n-1}$ & $o^1_{n}$ 
\\
    \hline
    \multirow{6}[2]{*}{Agents} & $1$ & $K-x_1$ & $ 0$ & $0$ & $ \cdots $ & $  0 $ & $ 0 $   
    \\
          & $2$   & $0$ & $ K-x_2$ & $0$ & $ \cdots$ & $ 0$ & $ 0$   \\
          & $3$ & $0$ & $0$ & $K-x_3$ & $\cdots$ & $0$ & $0$   \\
          & $\vdots$ & $\vdots$ & $\vdots$ & $ \vdots$ & $\ddots$ & $ \vdots$ & $ \vdots$ \\
          & $n-1$ & $0$ & $0$ & $0$ & $\cdots$ & $K-x_{n-1}$ & $0$ \\
          & $n$ & $0$ & $0$ & $0$ & $\cdots$ & $0$ & $K-x_n$ \\
    \hline
     \multicolumn{2}{|c|}
    {\multirow{2}[4]{*}{$c_i(o)$ for $O^2$}} & \multicolumn{6}{c|}{Items in $O^2$} 
    \\
\cline{3-8}    \multicolumn{2}{|c|}{} & $o^2_1$ & $ o^2_2$ & $o^2_3$ & $\cdots$ & $o^2_{n-1}$ & $o^2_{n}$ \bigstrut\\
    \hline
    \multirow{6}[2]{*}{Agents} & $1$ & $p_1$ & $ p_2$ & $p_3$ & $ \cdots $ & $  p_{r-1} $ & $ p_r $   \bigstrut[t]\\
          & $2$   & $p_1$ & $ p_2$ & $p_3$ & $ \cdots$ & $ p_{r-1}$ & $ p_r$   \\
          & $3$ & $10rn^{n+1}T$ & $10rn^{n+1}T$ & $10rn^{n+1}T$ & $\cdots$ & $10rn^{n+1}T$ & $10rn^{n+1}T$   \\
          & $\vdots$ & $\vdots$ & $\vdots$ & $ \vdots$ & $\ddots$ & $ \vdots$ & $ \vdots$ \\
          & $n-1$ & $10r^{n-3}n^{2n-3}T$ & $10r^{n-3}n^{2n-3}T$ & $10r^{n-3}n^{2n-3}T$ & $\cdots$ & $10r^{n-3}n^{2n-3}T$ & $10r^{n-3}n^{2n-3}T$ \\
          & $n$ & $10r^{n-2}n^{2n-2}T$ & $10r^{n-2}n^{2n-2}T$ & $10r^{n-2}n^{2n-2}T$ & $\cdots$ & $10r^{n-2}n^{2n-2}T$ & $10r^{n-2}n^{2n-2}T$ \\
    \hline
       \multicolumn{2}{|c|}
    {\multirow{2}[4]{*}{$c_i(o)$ for $O^3$}} & \multicolumn{6}{c|}{Items in $O^3$} 
    \\
\cline{3-8}    \multicolumn{2}{|c|}{} & $o^3_1$ & $ o^3_2$ & $o^3_3$ & $\cdots$ & $o^3_{n-1}$ & $o^3_{n}$ \bigstrut\\
    \hline
    \multirow{6}[2]{*}{Agents} & $1$ & $4T$ & $ 10T$ & $10T$ & $ \cdots $ & $  10T $ & $ 10T $   \bigstrut[t]\\
          & $2$   & $10nT$ & $ 4T $ & $10nT$ & $ \cdots$ & $ 10nT$ & $ 10nT $   \\
          & $3$ & $10n^2T$ & $10n^2T$ & $5T$ & $\cdots$ & $10n^2T$ & $10n^2T$   \\
          & $\vdots$ & $\vdots$ & $\vdots$ & $ \vdots$ & $\ddots$ & $ \vdots$ & $ \vdots$ \\
          & $n-1$ & $10n^{n-2}T$ & $10n^{n-2}T$ & $10n^{n-2}T$ & $\cdots$ & $5T$ & $10n^{n-2}T$ \\
          & $n$ & $10n^{n-1}T$ & $10n^{n-1}T$ & $10n^{n-1}T$ & $\cdots$ & $10n^{n-1}T$ & $5T$ \\
    \hline
    \end{tabular}%
\end{sidewaystable}%

In what follows, we will show that when the \ssc{partition} instance has an answer ``yes'', there exists an EQX allocation with social cost $o(K)$. If \ssc{partition} instance has ``no'' answer, then the social cost every EQX allocation is close to $nK$.
The key ideas of the constructed fair instance are: in any EQX allocation, (i) if agent $i$ receives her \emph{big-item} $o^1_i$, then $o^1_i$ must be the only item in her bundle unless each of the other agents also receives their big-item, 
and (ii) if we require the total cost of the assignment of $O^1$ to be zero, then $O^1$ must be allocated to at least two agents, and moreover, their cost should be the same in order to satisfy EQX; note that when verifying EQX criterion, the item (if any) with cost 0 would be hypothetically removed.

Suppose we have a ``yes'' instance of \ssc{partition} and let $I_1$ and $I_2$ be a solution. Now we consider allocation $A=(A_1,\ldots, A_n)$ where
\begin{itemize}
    \item $A_1$ contains $o^1_2,\ldots
    ,o^1_{n}$ and $o^2_j$ for all $j\in I_1$ and $o^3_1$;
    \item $A_2$ contains $o^1_1$ and $o^2_j$ for all $j\in I_2$ and $o^3_2$.
    \item for any $i\geq 3$, $A_3$ contains $o^3_i$ only.
\end{itemize}
Informally, in allocation $A$, the total cost of allocating $O^1$ is 0 and $O^2$ is allocated based on the solution of the \ssc{partition} instance.
It is not hard to verify that for any $i\in[n]$, $c_i(A_i)=5T$. Hence, allocation $A$ is EQX and has a social cost $5nT$, equal to the social cost of the optimal allocation.

For the other direction, suppose we have a ``no'' instance of \ssc{partition}. 
We will prove that in any EQX allocation $B=(B_1,\ldots,B_n)$, every agent $i$ must receive her big-item $o^1_i$. Our first step is to show that in $B$, there exists at least one agent receiving her big-item.
For a contradiction, assume no agent receives their big-item, which implies that the cost caused by the assignment of $O^1$ is 0 in allocation $B$.
Let $N_1= \{ i: B_i\cap O^1 \neq \emptyset \}$. By the contradiction assumption, for any $i\in N_1$, $c_i(B_i\cap O^1) = 0$. Then it is not hard to verify that $|N_1| \geq 2$
because if one agent receives all $o^1_j$'s, she would receive her big-item.
For any $i,j\in N_1$, $c_i(B_i) = c_j(B_j)$ holds as when comparing EQX, 
item with cost 0 would be hypothetically removed. 
Moreover, for any $i\in N_1$, $c_i(B_i) >0$ holds; If $N_1=[n]$, then every agent need to have the same cost to satisfy EQX, and the above proposition holds trivially ; If $N_1\neq [n]$, then for every $j\in [n]
\setminus N_1$, agent $j$ can only receive one item in $B$ due to EQX property and there will be items left unallocated.
\begin{claim}\label{claim::eqx-proof-1}
    Neither of agent 1 nor agent 2 is in $N_1$.
\end{claim}
\begin{proof}[Proof of Claim \ref{claim::eqx-proof-1}]
Suppose not. We first consider the case where both agents 1 and 2 are in $N_1$. Then the condition of EQX requires $c_1(B_1)=c_2(B_2)$. Accordingly, $B_2\cap O^3 = \{o^3_2\}$ must hold since for any $j\neq 2$, $c_2(o^3_j) > c_1(O^2\cup O^3)$.
This then implies $B_1\cap O^3= \{o^3_1\}$ as for any $j\neq 1$, $c_1(o^3_j) > c_2(o^3_2) + \sum_{o\in O^2}c_2(o)$.
Since the \ssc{partition} instance has an answer ``no'', it is impossible to achieve $c_1(B_1) = c_2(B_2)$ when ensuring that for any $i\in [2]$, $B_i\cap O^3 = \{o^3_i\}$ and $c_i(B_i\cap O^1) = 0$, yielding the desired contradiction.

If only one of $\{1,2\}$ is in $N_1$,
we below prove the case where agent 1 is in $N_1$. A similar proof applies to the case when agent 2 is in $N_1$.
Since $|N_1| \geq 2$, $N_1$ also contains another agent $j \geq 3$. We now show that $c_1(B_1) = c_j(B_j)$ never holds.
In order to make the cost of agent $1$ be identical to that of agent $j$, it must hold that $B_j\cap O^3 = \{o^3_j\}$ and $B_j\cap O^2 = \emptyset$ as for any $o\in O^2 \cup O^3\setminus \{o^3_j\}$, $c_j(o)>c_1(O^2\cup O^3)$ holds.
As a consequence, we have $c_j(B_j) = 5T$. Note that \ssc{partition} instance has an answer ``no'' and $c_1(B_1\cap O^1) = 0$, then it is impossible to make $c_1(B_1)=5T$, yielding the desired contradiction.
\end{proof}

The above claim implies that every agent $i\in N_1$ satisfies $i\geq 3$. 
For any two agents $i,j\in N_1$ with $i<j$ (without loss of generality), the only possible way to achieve $c_i(B_i) = c_j(B_j) > 0 $ is that: agents $i,j$ do not receive items in $O^2$ and moreover $B_i\cap O^3 = \{o^3_i\}$ and $B_j\cap O^3 = \{o^3_j\}$.
Note that for any $o\in O^2$, agent $j$ has a cost $c_j(o)=10r^{j-2}n^{n+j-2} T \geq 10r^{i-1}n^{n+i-1} T $, larger than the total cost of $O^2\cup O^3$ for agent $i$. Thus, agent $j$ receives no item from $O^2$, which then implies that agent $i$ does not receive items from $O^2$ as for every $o\in O^2$, $c_2(o)=10r^{i-2}n^{n+i-2}$, larger than the total cost of $O^3$ for agent $j$.
Similarly, one can verify that in order to ensure $c_i(B_i)=c_j(B_j)$, agent $i$ (resp. agent $j$) receives no item from $O^3\setminus\{o^3_i\}$ (resp. $O^3\setminus\{o^3_j\}$).
Therefore, for every $i'\in N_1$, it must hold that $B_{i'}\cap O^3=\{o^3_{i'}\}$ and moreover $c_{i'}(B_{i'}) = 5T$ as $i' \in N_1$.
We then claim that for any $j\in [n]$, agent $j$ receives exactly one item from $O^3$. This is clearly true for agents in $N_1$.
As for $j\in [n]\setminus N_1$, if agent $j$ receives more than two items from $O^3$, she violates EQX when compared to every agent $i\in N_1$; if $j$ receives nothing from $O^3$, then another agent $j'$ would receive at least two items from $O^3$ and thus violates EQX.
Moreover for any $j \geq 3$ and $j\notin N_1$, it must hold that $B_j\cap O^2=\emptyset$ as if $B_j\cap O^2 \neq \emptyset$, agent $j$ would receive two items from $O^2\cup O^3$ and hence violate EQX when compared to any agent $i\in N_1$; recall $c_i(B_i)=5T$.
Therefore, bundle $O^2$ must be entirely allocated between agents 1 and 2 in allocation $B$. 



We now prove that given all the above-mentioned restrictions, allocation $B$ can never be EQX. This is derived by the comparison among agents 1, 2 and agents in $N_1$.
Thereafter fix $t\in N_1$. By the construction of $N_1$, there exists $o\in B_t$ such that $c_t(o) = 0 $. Then, $\max_{o\in B_t} c_t(B_t\setminus\{o\}) = 5T$. Thus in order to make $B$ be EQX, $c_i(B_i)\geq 5T$ must be true for all $i\in [2]$. 
For any $i\in [2]$, if agent $i$ receives only $o^3_i$ from $O^3$, then due to ``no'' \ssc{partition} instance, there must be an agent $l\in [2]$ such that $c_l(B_l)<5T$, in which case allocation $B$ is not EQX.
Accordingly, at least one of agents $i\in [2]$ must receive some item $o^3_j$ with $j\neq i$. Without loss of generality, let agent 1 be such an agent. 
Then agent 1 only receives one item in allocation $B$ as $c_t(B_t) = 5T$ and every $o^3_j$ with $j\neq 1$ yields a cost of $10T$ for agent 1.
Accordingly, all $o^2_j$'s would be allocated to agent 2. Recall that agent 2 also receives one item from $O^3$, then we have $\max_{o\in B_2} c_2(B_2\setminus\{ o \}) >5T$, meaning that agent 2 violates EQX when compared to agent $t$. This yields the desired contradiction and thus allocation $B$ can never be EQX if no agent receives their big-item.
Therefore, there exists at least one agent receiving her big-item in allocation $B$. Let agent $i_1$ be one of such agents.

We next show that there exist at least two agents receiving their big-item in allocation $B$. For the sake of contradiction, suppose that agent $i_1$ is the only agent receiving their big-item. 
Then by the condition of EQX, it must hold that $|B_{i_1}| = 1$.
We can indeed remove agent $i_1$ and $o^1_{i_1}$ and only consider the reduced instance with $n-1$ agents and $2n+r-1$ items as other agents never violate EQX when comparing to agent $i_1$.
With a slight abuse of notations, let $N_1=\{i: B_i \cap O^1 \neq \emptyset \}$ and similarly $|N_1| \geq 2$. Then by arguments similar to the proof of Claim, 
agents 1 and 2 are not in $N_1$; this holds no matter whether agent 1 or 2 is in the reduced instance.
Then again, for any $i\in N_1$, $c_i(B_i)=5T$ holds and moreover $B_i\cap O^3 = \{ o^3_i\}$.
For the reduced instance, the total number of agents not in $N_1$ is $n-1-|N_1|$ as the reduced instance contains $n-1$ agents. Since each agent $i\in N_1$ receives exactly one item from $O^3$ in allocation $B$,
the total number of unallocated items from $O^3$ is $n-|N_1|$, and these $n-|N_1|$ items would be allocated to $n-1-|N_1| $ agents. Then there exists one agent receiving at least two items from $O^3$ and that agent violates EQX when compared to every agent $i\in N_1$, yielding the desired contradiction.
Therefore in allocation $B$, there exist at least two agents receiving their big-item.
We again eliminate two agents and their big-items and then consider the reduced instance with $n-2$ agents and $2r+n-2$ items.
By similar arguments, we can prove that there exist at least three agents receiving their big-items in allocation $B$. We repeat this argument and can show that all $n$ agents receive their big-items in allocation $B$.
Therefore, the social cost of EQX allocation $B$ is $nK-o(K)$.

By scaling the total cost of every agent to 1, we can claim that
it is NP-hard to approximate SC-EQX within an additive term of
$$
\frac{nK-o(K)}{K} \rightarrow n \textnormal{ as } K \rightarrow +\infty, 
$$
which yields the desired inapproximation factor in the statement of Theorem \ref{thm:hard-eqx-sc}.
\end{proof}

For the reduction in the proof of Theorem \ref{thm:hard-eqx-sc}, if the \ssc{partition} instance has an answer ``yes'', there exists an EQX allocation with social cost $5nT$, while if the answer is ``no'', every EQX allocation has social cost close to $nK$.
When $K$ tends to infinity, the reduction also indicates no polynomial time algorithm can achieve a bounded multiplicative approximation factor for SC-EQX!

\begin{corollary}
\label{cor:EQX:multiplicative}
     For any $n\geq 2$, there is no polynomial time algorithm having a bounded multiplicative approximation factor for SC-EQX unless $P = NP$.
\end{corollary}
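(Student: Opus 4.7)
The plan is to reuse the reduction from \ssc{Partition} constructed in the proof of Theorem \ref{thm:hard-eqx-sc} verbatim, and to observe that the gap it produces is not merely additive but in fact multiplicative and unbounded. Recall from that proof that in the constructed instance $\cI$: if the underlying \ssc{Partition} input is a yes-instance, then the exhibited allocation is EQX with social cost exactly $5nT$, matching the optimum of the unconstrained problem; if it is a no-instance, then every EQX allocation has social cost at least $nK - o(K)$. The parameter $K$ is free, subject only to $K \gg T\cdot n^{2n}\cdot r^n$, and its binary encoding length can be any polynomial in the input size.

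Suppose, for contradiction, that some polynomial-time algorithm $\mathcal{A}$ achieves a multiplicative approximation factor $\alpha$ for SC-EQX, where $\alpha$ is bounded by a polynomial in the input size. Given a \ssc{Partition} instance of size $s$, I would invoke the reduction with $K$ chosen large enough to satisfy both $K \gg T\cdot n^{2n}\cdot r^n$ and $nK - o(K) > 5n\alpha T$. Both conditions can be met with $K$ of magnitude polynomial in $s$ and $\alpha$, so $K$ admits a polynomial-length binary encoding, and the reduction still runs in polynomial time. On a yes-instance, $\mathcal{A}$ returns an EQX allocation whose social cost is at most $\alpha\cdot 5nT$; on a no-instance, its output has social cost at least $nK - o(K) > 5n\alpha T$. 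Thresholding the returned cost at $5n\alpha T$ therefore decides \ssc{Partition} in polynomial time, contradicting $\mathrm{P}\neq \mathrm{NP}$.

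The argument is essentially bookkeeping on top of Theorem \ref{thm:hard-eqx-sc}; the only point that requires any care is ensuring $K$ stays polynomially bounded while dominating $\alpha T$, and this is immediate because $\alpha$ itself is polynomially bounded. No new combinatorial property of EQX allocations is needed, and the scale-invariance of the multiplicative ratio means the normalization step used for the additive statement of Theorem \ref{thm:hard-eqx-sc} can simply be omitted.
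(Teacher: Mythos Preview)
Your proposal is correct and follows essentially the same approach as the paper: both reuse the reduction from Theorem~\ref{thm:hard-eqx-sc} and observe that the yes/no gap of $5nT$ versus $nK - o(K)$ yields an unbounded multiplicative ratio as $K$ grows. You are somewhat more explicit than the paper about selecting $K$ to remain polynomially encodable while dominating $\alpha T$, but this is just added rigor on the same idea.
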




\section{The Social Cost of EQ1 Allocations}
\label{sec:EQ1}

In contrast to the negative results on EQX allocations, in this section we prove that we can always find an EQ1 allocation with social cost no more than 1 in polynomial time. 
We then prove that for any $n\geq 3$, the problem of computing the optimal EQ1 allocations does not admit a polynomial time algorithm with additive loss smaller than $\frac{1}{2}$ unless P = NP.
When $n=2$, we give an additive approximation scheme.

\subsection{Bounding the Cost of EQ1}


To bound the cost of EQ1, we first design an algorithm that returns an EQ1 allocation whose social cost is at most 1. The algorithm starts with the optimal solution $S^*$
and implements a sequence of items reassignment.
Given an intermediate allocation $A$, we execute items reassignment based on directed graph $G_A=(N,E_A)$ constructed as follows: each agent $i\in N$ forms a vertex of $G_A$ and there is a directed edge or an arc $(i,j)\in E_A$ from $i$ to $j$ if $c_i(A_i\setminus\{ o\}) > c_j(A_j)$ for all $o\in A_i$.
In other words, an arc from $i$ to $j$ indicates that agent $i$ violates EQ1 when compared to agent $j$. We classify agents into three categories based on their in- and out-degrees:
\begin{align}\label{eq:GA:N}
\begin{split}
    & N_0 \text{ contains all agents with zero in-degree and zero out-degree in $G_A$;}\\
    & N_1 \text{ contains all agents with zero in-degree and positive out-degree in $G_A$;}\\
    & N_2 \text{ contains all agents with positive in-degree in $G_A$.}
\end{split}
\end{align}
The algorithm each time identifies an agent in $N_1$ and assigns one item from her bundle to the agent with the minimum cost in $N_2$. After each reassignment, we update $G_A$, $N_0$, $N_1$, $N_2$ and repeat the process until $N_1$ becomes empty. 
The formal description of the algorithm is shown in  Algorithm \ref{alg:eq1}.

\begin{algorithm}[h]
    \caption{Computing EQ1 allocations with social cost $\le$ 1}\label{alg:eq1}
\begin{algorithmic}[1]
\REQUIRE{An instance $\cI=\langle N,O,\{c_i\}_{i\in N} \rangle$}
\ENSURE Allocation $A=(A_1,\ldots,A_n)$.
\STATE Compute the optimal allocation $S^*=(S_1^*,\ldots, S^*_n)$ by allocating each item to \\
the agent with the smallest cost, where ties are broken arbitrarily.
\STATE Initialize $A_i\leftarrow S^*_i$ for all $i \in N$ and construct $G_A=(N,E_A)$ where there is a directed edge $(i,j)$ for $i,j \in N$ if $c_i(A_i\setminus\{ o\}) > c_j(A_j)$ for all $o\in A_i$. Define \\
$N_0$, $N_1$ and $N_2$ as Equation (\ref{eq:GA:N}).
\WHILE{$N_1\neq \emptyset$}
    \STATE Let $i^* \in \arg\max_{i\in N_1}(\min_{o\in A_i} c_i(A_i \setminus \{o\}))$ and arbitrarily pick $o' \in A_{i^*}$.\label{alg1::step:choose-largest-cost-agent}
    \STATE Let $j^* \in \arg\min_{j \in N_2}c_i(o') $ and make the reassignment $A_{i^*} \gets A_{i^*}\setminus \{o' \}$ 
    \\and $A_{j^*}\gets A_{j^*}\cup\{o'\}$.\label{alg1::step:choose-agent-least-cost}
    \STATE Update $G_A = (N,E_A)$, $N_0$, $N_1$ and $N_2$.
\ENDWHILE
\end{algorithmic}
\end{algorithm}

We first present several lemmas and terminologies used in the below proof. 
Each iteration of the while loop in Algorithm \ref{alg:eq1} is called a round. 
We focus on two time frames of a given round $t$: \emph{at the beginning of round }$t$ refers to the moment right before the item reassignment happens; and \emph{after round} $t$ refers to the moment right after the item movement happens and the graph and corresponding sets get updated.

\begin{observation}\label{obs:eq1}
    Throughout the while loop of Algorithm \ref{alg:eq1}, the minimum cost of an agent is non-decreasing.
\end{observation}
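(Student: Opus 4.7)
The plan is to track, within a single round of the while loop, how individual agents' costs change. Let $A$ denote the allocation at the beginning of some round and let $A'$ denote the allocation after the reassignment, in which item $o'$ moves from $i^* \in N_1$ to $j^* \in N_2$. Since only the bundles of $i^*$ and $j^*$ are modified, I only need to compare $\min_{i \in N} c_i(A_i)$ with $\min_{i \in N} c_i(A'_i)$ by inspecting each agent's contribution separately.

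For any agent $k \notin \{i^*, j^*\}$, the bundle is unchanged, so $c_k(A'_k) = c_k(A_k) \geq \min_i c_i(A_i)$. For $j^*$, the cost can only increase since $c_{j^*}(o') \geq 0$; hence $c_{j^*}(A'_{j^*}) \geq c_{j^*}(A_{j^*}) \geq \min_i c_i(A_i)$. The only non-trivial case is $i^*$, whose cost strictly decreases to $c_{i^*}(A_{i^*} \setminus \{o'\})$, and this is the step that needs the structural properties of $G_A$.

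The heart of the argument exploits the defining property of $N_1$: since $i^*$ has positive out-degree in $G_A$, there exists $j \in N$ with $(i^*, j) \in E_A$, which by the construction of $E_A$ means $c_{i^*}(A_{i^*} \setminus \{o\}) > c_j(A_j)$ for every $o \in A_{i^*}$. Instantiating this with $o = o'$ gives
\[
c_{i^*}(A'_{i^*}) \;=\; c_{i^*}(A_{i^*} \setminus \{o'\}) \;>\; c_j(A_j) \;\geq\; \min_{i \in N} c_i(A_i).
\]
Combining the three cases, every agent's new cost is at least the previous minimum, so the minimum itself does not decrease. Since this reasoning applies to every round, monotonicity propagates across the entire while loop. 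The only place that requires real attention is the use of the out-edge at $i^*$; everything else is a routine check, and in particular no property of the specific tie-breaking in lines \ref{alg1::step:choose-largest-cost-agent}--\ref{alg1::step:choose-agent-least-cost} is needed for this observation.
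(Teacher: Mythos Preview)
Your proof is correct and follows essentially the same approach as the paper's: both arguments reduce to a single round and use the fact that $i^*\in N_1$ has an outgoing edge, so removing any one item (in particular $o'$) from $A_{i^*}$ still leaves $c_{i^*}$ strictly above some other agent's cost and hence above the minimum. Your write-up is simply a more explicit unpacking of the paper's one-sentence justification.
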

To see Observation \ref{obs:eq1}, let us compare agents' cost before and after some round $t$. There exists exactly one agent (let's say agent $i^*$) whose cost is weakly decreased 
and moreover this agent belongs to $N_1$ at the beginning of round $t$. After round $t$, the cost of agent $i^*$ should be strictly larger than the minimum cost of an agent in the allocation at the beginning of round $t$; otherwise right before round $t$, agent $i^*$ satisfies EQ1 and should not be in $N_1$.


\begin{lemma}\label{lem::eq1-1}
    For agent $i$, if $i \in N_0$ at the beginning of some round, then $i\in N_0$ holds in all future rounds.
\end{lemma}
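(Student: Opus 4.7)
The plan is to prove the statement by induction on rounds: assuming $i\in N_0$ at the beginning of some round $t$, I will show $i\in N_0$ also holds at the beginning of round $t+1$, and iterating yields the conclusion for all future rounds. Since $i\in N_0$ implies $i\notin N_1\cup N_2$, while round $t$ only touches the bundles of $i^*\in N_1$ and $j^*\in N_2$, we have $i\ne i^*$ and $i\ne j^*$, so $A_i$ and hence $c_i(A_i)$ are unchanged by the reassignment. Therefore the only arcs that can appear or disappear incident to $i$ are those between $i$ and $\{i^*,j^*\}$. To maintain $i\in N_0$ I must rule out four potential new arcs in the updated graph: $i\to i^*$, $i\to j^*$, $i^*\to i$, and $j^*\to i$.

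Three of the four cases are straightforward. For $i\to j^*$: since $c_{j^*}(A_{j^*})$ only increases when $o'$ is added, the EQ1 witness from $i$ against $j^*$ available before the round still certifies $c_i(A_i\setminus\{o\})\le c_{j^*}(A_{j^*}^{\text{old}})\le c_{j^*}(A_{j^*}^{\text{new}})$. For $i^*\to i$: let $o^\star\in A_{i^*}^{\text{old}}$ be the EQ1 witness with $c_{i^*}(A_{i^*}^{\text{old}}\setminus\{o^\star\})\le c_i(A_i)$; if $o^\star=o'$ then $c_{i^*}(A_{i^*}^{\text{new}})\le c_i(A_i)$ and any subsequent removal works, otherwise $o^\star$ remains in $A_{i^*}^{\text{new}}$ and $c_{i^*}(A_{i^*}^{\text{new}}\setminus\{o^\star\})\le c_{i^*}(A_{i^*}^{\text{old}}\setminus\{o^\star\})\le c_i(A_i)$, so EQ1 persists. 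For $i\to i^*$: because $i^*\in N_1$ has positive out-degree, there exists some $j'$ with arc $i^*\to j'$, giving $c_{i^*}(A_{i^*}^{\text{old}}\setminus\{o\})>c_{j'}(A_{j'}^{\text{old}})$ for every $o\in A_{i^*}^{\text{old}}$; specializing to $o=o'$ produces $c_{i^*}(A_{i^*}^{\text{new}})>c_{j'}(A_{j'}^{\text{old}})$. Since $i\in N_0$ has no arc to $j'$ (note $j'\ne i$ by the zero in-degree of $i$), some $o\in A_i$ satisfies $c_i(A_i\setminus\{o\})\le c_{j'}(A_{j'}^{\text{old}})<c_{i^*}(A_{i^*}^{\text{new}})$, so no arc $i\to i^*$ forms.

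The remaining case, $j^*\to i$, is the main obstacle because $A_{j^*}$ grows and the min-cost-after-one-removal might in principle exceed $c_i(A_i)$. The crucial observation that unlocks it is the sharp bound $c_{j^*}(A_{j^*}^{\text{old}})<c_i(A_i)$, which is on the full bundle rather than up-to-one-item. To prove this, I combine (a) $j^*\in N_2$ has some in-arc $k\to j^*$, so $\min_{o\in A_k}c_k(A_k\setminus\{o\})>c_{j^*}(A_{j^*}^{\text{old}})$, with (b) $i\in N_0$ has zero in-degree and in particular no arc from $k$, so $\min_{o\in A_k}c_k(A_k\setminus\{o\})\le c_i(A_i)$. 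Chaining (a) and (b) through the common witness $k$ gives $c_{j^*}(A_{j^*}^{\text{old}})<c_i(A_i)$. Finally, in the updated allocation, removing exactly the newly added item $o'$ from $A_{j^*}^{\text{new}}$ yields $c_{j^*}(A_{j^*}^{\text{new}}\setminus\{o'\})=c_{j^*}(A_{j^*}^{\text{old}})<c_i(A_i)$, certifying EQ1 of $j^*$ against $i$ in the new graph and ruling out the arc $j^*\to i$. With all four potential arcs excluded, $i$ retains zero in- and out-degree, completing the inductive step.
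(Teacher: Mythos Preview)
Your proof is correct and follows essentially the same approach as the paper: both examine a single round and show that no new arcs incident to $i$ can form, with the crucial $j^*\to i$ case handled identically by chaining through an in-neighbor $k$ of $j^*$ to obtain the strict bound $c_{j^*}(A_{j^*}^{\text{old}})<c_i(A_i)$. The only minor organizational difference is that the paper dispatches both out-degree cases at once via Observation~\ref{obs:eq1} (the minimum cost over all agents is non-decreasing), whereas you treat $i\to i^*$ and $i\to j^*$ separately with direct arguments.
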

\begin{proof}
    For the sake of contradiction, assume that agent $i$ has zero in- and out-degree at the beginning of round $t$ while $i$ has positive in- or out-degree after round $t$.
    Throughout the proof, let $A'$, $G_{A'}$, $N'_0$, $N'_1$, $N'_2$ refer to the allocation, the graph and the corresponding sets of agents after round $t$;
    let $A^b$, $G_{A^b}$, $N^b_0$, $N^b_1$ and $N^b_2$ refer to the allocation, the graph and the corresponding sets of agents at the beginning of round $t$. By the construction, $i\in N^b_0$.

    If $i\in N'_1$, by Observation \ref{obs:eq1}, it holds that $\min_{j}c_j(A'_j) \geq \min_{j} c_j(A^b_j)$. As $i\in N^b_0$, 
    agent $i$ does not violate EQ1 in allocation $A_i'$ and moreover $A^b_i = A'_i$.
    Combining these facts, the out-degree of $i$ in $G_{A'}$ should be 0 and thus $i\notin N'_1$, deriving the desired contradiction.
    

    If $i\in N'_2$, suppose that in $G_{A'}$ there is an arc $(j,i)$. 
    According to the reallocation rule, the appearance of arc $(j,i)$ can only be caused by the fact that agent $j$ takes some item $o'$ in round $t$. Hence, $j\in N_2^b$ holds. Then we claim that $c_j(A^b_j) < c_i(A^b_i)$; otherwise, the vertex pointing to $j$ in $G_{A^b}$ should also point to $i$ in $G_{A^b}$, making $i\in N^b_2$, a contradiction.
    For allocation $A'$, we have $$\min_{o\in A'_j }c_j(A'_j\setminus\{o\}) \leq c_j(A'_j\setminus\{o'\}) = c_j(A_j^b) < c_i(A^b_i) = c_i(A'_i),$$
    which indicates that arc $(j,i)$ does not exist in $G_{A'}$, deriving the desired contradiction.
\end{proof}

\begin{lemma}\label{lem::eq1-2}
    For agent $i$, if in $G_{S^*}$, the graph  corresponding to the optimal allocation $S^*$, the in-degree of vertex $i$ is 0, then the vertex $i$ has in-degree 0 in $G_A$ for every intermediate allocation $A$.
\end{lemma}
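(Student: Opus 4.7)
My plan is to induct on the iteration number of the while loop. The base case is immediate since the initial allocation is $S^*$ itself. For the inductive step, suppose agent $i$ has in-degree $0$ in $G_{A^b}$ at the beginning of some round, and assume for contradiction that a new arc $(j,i)$ appears in $G_{A'}$. The round modifies only $A_{i^*}$ (losing $o'$) and $A_{j^*}$ (gaining $o'$); since in-degree $0$ of $i$ forces $i\notin N_2^b$ and hence $i\neq j^*$, the only way $A_i$ changes is $i=i^*$.

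I will first dispatch the cases $i\neq i^*$. If in addition $j\notin\{i^*,j^*\}$, no bundle changes and the arc status is preserved, contradicting that $(j,i)$ is new. If $j=i^*$, a direct calculation gives $\min_o c_j(A'_j\setminus\{o\})\le \min_o c_j(A_j^b\setminus\{o\})\le c_i(A_i^b)=c_i(A'_i)$, where the second inequality is the absence of arc $(j,i)$ in $G_{A^b}$; hence no new arc is created. If $j=j^*\in N_2^b$, setting $o=o'$ in the new-arc inequality yields $c_j(A_j^b)>c_i(A_i^b)$; any $k$ with arc $(k,j)$ in $G_{A^b}$ then satisfies $\min_o c_k(A_k^b\setminus\{o\})>c_j(A_j^b)>c_i(A_i^b)$, producing an arc $(k,i)$ in $G_{A^b}$ and contradicting the induction hypothesis.

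The main case is $i=i^*$, so $i\in N_1^b$ and $c_i(A'_i)=c_i(A_i^b)-c_i(o')\ge c_i(A_i^b)-\max_o c_i(o)=\min_o c_i(A_i^b\setminus\{o\})$. My goal is to establish, for every $j\neq i$, the key inequality
\[
\min_{o\in A_j^b}c_j(A_j^b\setminus\{o\})\le \min_{o\in A_i^b}c_i(A_i^b\setminus\{o\}),
\]
together with the strengthening $c_j(A_j^b)<\min_o c_i(A_i^b\setminus\{o\})$ whenever $j\in N_2^b$. These are enough to rule out the new arc: for $j\neq j^*$ directly, and for $j=j^*$ by additionally noting $\min_o c_j(A'_j\setminus\{o\})\le c_j(A'_j\setminus\{o'\})=c_j(A_j^b)$. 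For $j\in N_1^b$ the key inequality is precisely the selection rule in Step~\ref{alg1::step:choose-largest-cost-agent}; for $j\in N_0^b$ it follows by picking any out-neighbor $k$ of $i$ (which exists because $i\in N_1^b$) and using $\min_o c_j(A_j^b\setminus\{o\})\le c_k(A_k^b)<\min_o c_i(A_i^b\setminus\{o\})$.

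The main obstacle is the subcase $j\in N_2^b$, and the plan is to exploit that $G_{A^b}$ is a DAG: arc $(a,b)$ implies $c_a(A_a^b)>c_b(A_b^b)$, ruling out directed cycles. Tracing backwards along incoming arcs of $j$ must terminate at a source $l^*$, and since every vertex on the walk has positive out-degree (to its successor on the path), $l^*\in N_1^b$. Along the directed path $l^*\to\cdots\to j$ costs strictly decrease, so $c_j(A_j^b)<\min_o c_{l^*}(A_{l^*}^b\setminus\{o\})$; applying the selection rule to $l^*$ then gives $c_j(A_j^b)<\min_o c_i(A_i^b\setminus\{o\})$, which delivers both the key inequality and its $N_2^b$-strengthening simultaneously. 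This exhausts all cases and closes the induction.
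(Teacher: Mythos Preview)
Your proof is correct and follows essentially the same strategy as the paper: induct on rounds, split on whether $i=i^*$, and in the main case handle $j$ according to its membership in $N_0^b,N_1^b,N_2^b$, using the DAG structure to trace any $j\in N_2^b$ back to a source in $N_1^b$ and then invoke the selection rule. Your write-up is in fact a bit more self-contained—you dispatch the $i\neq i^*$ subcases directly (including $j=i^*$, which the paper glosses over) rather than appealing to Lemma~\ref{lem::eq1-1} and Observation~\ref{obs:eq1}.
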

\begin{proof}
    For the sake of contradiction, suppose that some vertex $i$ has in-degree 0 in $G_{S^*}$ but has positive in-degree after some round $t$. 
    Suppose that right after round $t$ is the earliest moment such that vertex $i$ has a positive in-degree.
    Throughout the proof, let $A$, $G_A$, $N_0$, $N_1$, $N_2$ refer to the allocation, the graph and the corresponding sets of agents after round $t$;
    let $A^b$, $G_{A^b}$, $N^b_0$, $N^b_1$, $N^b_2$ refer to the allocation, the graph and the corresponding sets of agents at the beginning of round $t$. Then the contradiction assumption implies that $i\notin N^b_2$ and $i\in N_2$. In the following, we prove no such an agent $i$ exists.
    
    As $i\notin N^b_2$, there are two possibilities: (i) $i\in N^b_1$, and (ii) $i\in N^b_0$. According to Lemma \ref{lem::eq1-1}, it must hold that $i\in N^b_1$. We below split the proof by discussing whether an item is taken from the bundle of agent $i$ in round $t$.

    If an item is taken from the bundle of agent $i$ in round $t$, i.e., $|A^b_i\setminus A_i| = 1$,
    then agent $i$ is chosen by the Algorithm \ref{alg:eq1} at Step \ref{alg1::step:choose-largest-cost-agent} in round $t$. Then for any $j\in N^b_1$ and any $o'\in A_j^b$, we have $c_j(A^b_j\setminus\{o'\}) \leq \min_{o\in A_i^b} c_i(A_i^b\setminus\{o\}) \leq c_i(A_i)$.
    Since for any $j\in N^b_1\setminus\{i\}$, $A_j^b=A_j$ holds, then there is no arc $(j,i)$ in graph $G_A$ as every agent $j$ satisfies EQ1 when compared to agent $i$ in $A$. By Observation \ref{obs:eq1}, there is no arc pointing from some vertex in $N^b_0$ to $i$.
    To show $i\in N_2$ never holds, the remaining is to prove that there is no arc $(j,i)$ with $j\in N^b_2$ in $G_A$. Let us first prove the following claim.
    \begin{claim}\label{claim::ak1}
    For any $j\in N^b_2$, arc $(i,j)$ exists in $G_{A^b}$.
    \end{claim}
    \begin{proof}[Proof of Claim \ref{claim::ak1}]
        We first present properties satisfied by every intermediate graph $G_{A'}$. Let us now consider a fixed graph $G_{A'}$ and corresponding sets $\{N'_i\}$. There is no cycle in $G_{A'}$ as any arc $(x,y)$ indicates $c_x(A'_x) > c_y(A'_y)$ and the existence of a cycle (containing some vertex $x$) implies $c_x(A'_x)>c_x(A'_x)$, a contradiction. 
        Then consider any three vertices $x,y,z$. If there are arcs $(x,y)$ and $(y,z)$, then there exists an arc $(x,z)$ as for any $o\in A'_x$, $c_x(A'_x\setminus\{o\})>c_y(A'_y)>c_z(A'_z)$. 
        Accordingly, for any $q\in N'_2$, there exists in $G_{A'}$ an arc $(p,q)$ pointing from some $p\in N'_1$ to $q$. The reason is that vertex $q$ is on some path as there is no cycle, and the starting vertex $s$ of that path has in-degree 0 and positive out-degree and hence $s\in N'_1$.
        By the above argument, arc $(s,q)$ exists in $G_{A'}$.

        Now we are ready to prove Claim \ref{claim::ak1}. By above arguments, for any $j\in N^b_2$, there exists an arc $(j', j)$ with $j'\in N^b_1$. By the construction of $G_{A^b}$, arc $(j', j)$ implies $c_{j'}(A^b_{j'}\setminus\{o\})>c_{j}(A^b_{j})$ for all $o\in A^b_{j'}$.
        According to Step \ref{alg1::step:choose-largest-cost-agent}, agent $i$ satisfies $\min_{o\in A^b_i}c_i(A^b_i\setminus\{o\})\geq \min_{o\in A^b_{q}}c_{q}(A^b_{q}\setminus\{o\})$ for all $q\in N^b_1$, and hence for any $j\in N^b_2$, $\min_{o\in A^b_i\setminus\{o\}}c_i(A^b_i)>c_j(A^b_j)$ holds.
        Therefore for any $j\in N^b_2$, arc $(i,j)$ exists in $G_{A^b}$.
    \end{proof}
     \noindent For any $j\in N_2^b$, by Claim \ref{claim::ak1}, it is not hard to verify that she does not violate the condition of EQ1 in allocation $A$ when compared to agent $i$. Hence, arc $(j,i)$ does not exist in $G_A$.

     We now consider the case where agent $i$ is not chosen at Step \ref{alg1::step:choose-largest-cost-agent} in round $t$, i.e., $A^b_i=A_i$. Let $l$ be the agent receiving an item in round $t$. For any $j\neq l$, no arc $(j, i)$ exists in $G_A$ as $A_j=A_j^b$.
     As for agent $l$, it is clear that $l\in N^2_b$ and thus $c_i(A^b_i) > c_l(A^b_l)$ as $i\in N^b_1$; note if $c_i(A^b_i) \leq c_l(A^b_l)$, vertex $i$ would have a positive in-degree in $G_{A^b}$, a contradiction. 
     The inequality $c_i(A^b_i)>c_l(A^b_l)$ implies that agent $l$ does not violate EQ1 in $A$ when compared to agent $i$ as $A_i=A^b_i$ and $|A_l\setminus A^b_l| = 1$. Therefore, there is no arc $(l,i)$ in $G_A$. 
\end{proof}

\begin{theorem}\label{thm::eq1-sc-at-most-1}
    Algorithm \ref{alg:eq1} runs in polynomial time and returns an EQ1 allocation with a social cost of at most 1.
\end{theorem}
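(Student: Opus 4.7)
The theorem asserts three properties of Algorithm \ref{alg:eq1}: polynomial runtime, EQ1 of the returned allocation, and social cost at most $1$. For polynomial runtime, each iteration of the while loop reassigns exactly one item from some $A_{i^*}$ (with $i^* \in N_1$) to some $A_{j^*}$ (with $j^* \in N_2$). I would bound the number of iterations via a monovariant argument: Lemma \ref{lem::eq1-1} ensures $N_0$ is absorbing (so the at most $n$ entries into $N_0$ are themselves bounded), and Observation \ref{obs:eq1} guarantees a non-decreasing minimum cost across rounds. Combined with the finiteness of the item set, this bounds the algorithm's iterations polynomially (roughly $O(nm)$). Each round's bookkeeping---computing $G_A$, choosing $i^*$ and $j^*$, and updating the partition---is polynomial in $n$ and $m$.

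For EQ1, the loop exits only when $N_1 = \emptyset$. I would argue that $N_1 = \emptyset$ forces $E_A = \emptyset$, which is exactly the EQ1 condition. The key (already exploited in the proof of Lemma \ref{lem::eq1-2}) is that $G_A$ is acyclic: an arc $(i,j)$ asserts the strict inequality $c_i(A_i \setminus \{o\}) > c_j(A_j)$ for every $o \in A_i$, and following a cycle through these strict inequalities would give $c_x(A_x) > c_x(A_x)$, impossible. In any nonempty DAG there exists a source vertex of zero in-degree and positive out-degree, which by definition lies in $N_1$. Hence $N_1 = \emptyset$ implies $E_A = \emptyset$, completing the EQ1 argument.

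The main technical obstacle is the social cost bound $\SC(A) \le 1$. I plan to prove the invariant $\SC(A) \le 1$ by induction on rounds. The base case is immediate: $\SC(S^*) = \sum_{o \in O} \min_{i \in N} c_i(o) \le \sum_{o \in O} c_k(o) = 1$ for any agent $k$, since $S^*$ assigns each item to a minimum-cost recipient. The inductive step is where the difficulty lies: a round changes $\SC$ by $c_{j^*}(o') - c_{i^*}(o')$, which can be strictly positive (e.g., when $o' \in S^*_{i^*}$, so that $c_{i^*}(o') = \min_k c_k(o') \le c_{j^*}(o')$). My approach is to identify a ``dominating'' agent $k$ that persists in $N_2$ throughout the algorithm's moves, and to exploit the algorithm's choice $j^* \in \arg\min_{j \in N_2} c_j(o')$ to deduce $c_{j^*}(o') \le c_k(o')$ at every move. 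Together with the fact that items never reassigned retain their minimum cost $\min_i c_i(o) \le c_k(o)$, summing over all items yields $\SC(A) \le \sum_{o \in O} c_k(o) = c_k(O) = 1$. The main subtlety is verifying that such a persistent $k$ exists (since $N_2$ may shrink during the run); I expect to choose $k$ as an agent that remains in $N_2$ until the last item move, whose existence follows from the fact that $N_2$ is nonempty at every iteration of the while loop.
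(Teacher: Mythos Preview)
Your overall strategy---find an agent $k$ that stays in $N_2$ for the entire run, use Step~\ref{alg1::step:choose-agent-least-cost} to bound each reassigned item's final-recipient cost by $c_k(\cdot)$, and bound the untouched items via optimality of $S^*$---is exactly the paper's. The gap is in your justification for the existence of such a $k$. You write that it ``follows from the fact that $N_2$ is nonempty at every iteration of the while loop,'' but a sequence of nonempty sets need not share a common element (imagine $N_2$ alternating between $\{1\}$ and $\{2\}$). What is actually needed is that $N_2$ is \emph{monotone non-increasing} across rounds: once an agent acquires in-degree zero it never regains positive in-degree, so any agent in $N_2$ at the final round $T$ was in $N_2$ at every round $1,\ldots,T$. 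The paper obtains this from Lemma~\ref{lem::eq1-2} (whose proof in fact establishes the round-by-round invariant: if $i$ has in-degree $0$ at the beginning of a round, it still has in-degree $0$ after that round). You cite Lemma~\ref{lem::eq1-2} only in passing, in the EQ1 paragraph, and never deploy it where it matters.

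The same omission weakens your runtime argument. Lemma~\ref{lem::eq1-1} (absorbing $N_0$) and Observation~\ref{obs:eq1} (non-decreasing minimum cost) by themselves do not yield an $O(nm)$ iteration bound; in particular, a non-decreasing real-valued monovariant gives no polynomial bound on its own. The paper's $2nm$ count rests on the claim that agents can only move in the direction $N_2 \to N_1 \to N_0$ (so the partition $(N_0,N_1,N_2)$ changes at most $2n$ times, with at most $m$ reassignments between changes), and the ``no $N_1 \to N_2$'' part of that claim again requires Lemma~\ref{lem::eq1-2}.
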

\begin{proof}
    By the graph construction, $N_1=\emptyset$ implies $N_2=\emptyset$ and vice versa.
    As after each round, the cardinality of the union bundle of agents in $N_2$ is increased by 1, then the graph $G_A$ varies once in at most every $m$ rounds.
    Based on Lemmas \ref{lem::eq1-1} and \ref{lem::eq1-2}, 
    each time when $G_A$ varies, we have three possible situations: (i) one agent moves from $N_1$ to $N_0$, (ii) one agent moves from $N_2$ to $N_1$ or to $N_0$, and (iii) both (i) and (ii) happen. As a consequence, $N_1$ and $N_2$ will be $\emptyset$ after $G_A$ getting updated (at most) $2n$ times. Thus, Algorithm \ref{alg:eq1} terminates in $2nm$ rounds.
    The running time of executing a round is at most $O(mn)$ and therefore Algorithm \ref{alg:eq1} terminates in $O(m^2n^2)$ time. 
    
    
    The fact that returned allocation $A$ is EQ1 directly follows from the construction of $G_A$ and $N_1, N_2=\emptyset$ at termination. We now bound the social cost of $A$. Let $T$ be the last round of the algorithm, i.e., after round $T$, $N_1, N_2=\emptyset$.
    Suppose $N'_2$ is the set of vertex with positive in-degree at the beginning of round $T$ and suppose $N'_2 \ni k$.
    Denote by $Q$ the set of items allocated to different agents in $S^*$ and $A$.
    For any $o\in Q$, suppose $o\in A_j$, and then we have $c_j(o) \leq c_k(o)$ by Step \ref{alg1::step:choose-agent-least-cost} and the fact that $k$ is in $N_2$ in every round $t=1,...,T$ (Lemmas \ref{alg:eq1} and \ref{lem::eq1-2}). Accordingly, we have the following inequality,
    $$
    \SC(A) \leq \sum_{o\in Q} c_k(o) + \sum_{i\in N} c_i(S^*_i\setminus Q) \leq \sum_{o\in Q} c_k(o) + \sum_{i\in N} c_k(S^*_i\setminus Q) = c_k(O)=1,
    $$
    where the second inequality transition is because $S^*$ is the optimal allocation.
\end{proof}

\begin{theorem}\label{thm::cof-eq1}
    The cost of EQ1 is at most 1 and at least $1-\frac{1}{n}$ for any $n\geq 2$.
\end{theorem}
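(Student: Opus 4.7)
The upper bound is immediate: Theorem~\ref{thm::eq1-sc-at-most-1} produces, for every instance, an EQ1 allocation of social cost at most $1$, so the minimum social cost over EQ1 allocations is at most $1$; subtracting $\OPT(\cI)\ge 0$ yields cost of EQ1 at most $1$.

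For the lower bound, my plan is to construct a parameterized family of instances in which the gap between the best EQ1 social cost and $\OPT$ approaches $1-1/n$. Take $n$ agents and $k+n$ items partitioned into $n$ ``big'' items $o_1,\ldots,o_n$ and $k$ ``small'' items: each agent $i\le n-1$ assigns cost $1/n$ to every big item and $0$ to every small item, while agent $n$ assigns cost $1/(k+n)$ to every item. Normalization is routine. The unconstrained minimizer sends every big item to agent $n$ and every small item to some agent in $\{1,\ldots,n-1\}$, giving $\OPT(\cI)=n/(k+n)$, which tends to $0$ as $k\to\infty$.

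The heart of the argument is to lower bound the minimum EQ1 social cost by $(n-1)/n+1/(k+n)$. Writing $k_i$ for the number of big items in $A_i$, agent $i\le n-1$ has cost $k_i/n$ and agent $n$ has cost $|A_n|/(k+n)$. Pairwise EQ1 among agents $1,\ldots,n-1$ implies $k_i\le k_j+1$ whenever $k_i\ge 1$, and EQ1 from agent $n$'s side against a zero-cost agent in $\{1,\ldots,n-1\}$ implies $|A_n|\le 1$. Combining these with the counting identity $\sum_i k_i=n$ and a short case analysis rules out every configuration of $(k_1,\ldots,k_n)$ other than the one in which $k_i=1$ for every $i$: any other configuration either violates EQ1 or inflates the social cost past $(n-1)/n+1/(k+n)$. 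Substituting yields the claimed lower bound; the gap then evaluates to $(n-1)/n-(n-1)/(k+n)$, which tends to $1-1/n$ as $k\to\infty$.

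The main obstacle I anticipate is the case analysis for the lower bound. Agent $n$'s per-item cost $1/(k+n)$ is asymptotically negligible compared to $1/n$, so the EQ1 constraints involving agent $n$ switch direction depending on $|A_n|$ and must be tracked carefully. The decisive observation is the dichotomy that any zero-cost agent from $\{1,\ldots,n-1\}$ forces $|A_n|\le 1$, which, together with $\sum_i k_i = n$, eliminates every distribution other than $k_i=1$ for every $i$.
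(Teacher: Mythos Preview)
Your argument is correct. The upper bound is handled exactly as in the paper, by invoking Theorem~\ref{thm::eq1-sc-at-most-1}.

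For the lower bound, your construction differs from the paper's. The paper uses only $n+1$ items and a parameter $\epsilon\to 0$: agent~$1$ has cost $\epsilon$ on each of $o_1,\ldots,o_n$ and cost $1-n\epsilon$ on $o_{n+1}$, while agents $2,\ldots,n$ have cost $1/n$ on each of $o_1,\ldots,o_n$ and cost $0$ on $o_{n+1}$. The optimal allocation gives all of $o_1,\ldots,o_n$ to agent~$1$ at cost $n\epsilon$, and the paper argues in a couple of lines that the best EQ1 allocation assigns one of $o_1,\ldots,o_n$ to each agent, giving social cost $(n-1)/n+\epsilon$. Your instance instead uses $k+n$ items with $k\to\infty$ and makes the special agent (your agent~$n$) uniform, which forces the more elaborate case analysis on $(k_1,\ldots,k_n)$ that you outline. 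That analysis is sound: the dichotomy on whether some agent in $\{1,\ldots,n-1\}$ has $k_j=0$ cleanly eliminates all configurations except $k_i=1$ for every $i$ (the zero-cost case collapses via the forced bound $|A_n|\le 1$ together with pigeonhole among the remaining $n-2$ agents, and the $k_n=0$ subcase of the other branch has cost at least $1+1/n$). Both routes reach the same limit $1-1/n$; the paper's instance is more economical in the number of items and yields a shorter argument, while yours trades compactness for a special agent with a completely uniform cost function.
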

\begin{proof}
    The upper bound follows directly from Theorem \ref{thm::eq1-sc-at-most-1}. As for the lower bound, let us consider the following instance. There are a set $\{1,\ldots,n\}$ of $n$ agents and a set $O=\{o_1,\ldots,o_n,o_{n+1}\}$ of $n+1$ items. Let $\epsilon>0$ be an arbitrarily small number. The cost functions $\{c_i\}_{i\in [n]}$ of agents are described as follows;
    \begin{itemize}
        \item for agent 1: $c_1(o_j) = \epsilon$ for all $j\leq n$ and $c_1(o_{n+1}) = 1-n\epsilon$;
        \item for agent $i\geq 2$: $c_i(o_j)=\frac{1}{n}$ for all $j\leq n$ and $c_i(o_{n+1}) = 0$.
    \end{itemize}
    In an optimal allocation $S^*$, agent 1 receives $S^*_1=\{o_1,\ldots,o_n\}$ and agent 2 receives $o_{n+1}$, resulting in $\SC(S^*) = n\epsilon$. However, agent 1 violates EQ1 in $S^*$.
    Let us consider allocation $A=(A_1,\ldots,A_n)$ with $A_i=\{o_i\}$ for $i\in [n-1]$ and $A_n=\{o_n,o_{n+1}\}$. It is not hard to verify that $A$ is an EQ1 allocation and has social cost $\frac{n-1}{n}+\epsilon$.
    Moreover, no other EQ1 allocation has social cost less than $\SC(A)$; any EQ1 allocation better than $A$ must not allocate $o_{n+1}$ to agent 1 and then in such an allocation, agent 1 receives at most one item.
    Therefore, the cost of fairness regarding EQ1 is at least $\frac{n-1}{n}-(n-1)\epsilon$ tending to $1-\frac{1}{n}$ when $\epsilon \rightarrow 0$.
\end{proof}

\subsection{Computing the Optimal EQ1 Allocations}
We now consider the optimization problem of minimizing social cost subject to EQ1 constraint, denoted by SC-EQ1. 
It is known that SC-EQ1 is NP-hard \cite{DBLP:journals/aamas/SunCD23}, where the reduction, however, does not suggest any hardness of approximation.
We below show that SC-EQ1 is 
hard to approximate within some constant even when there are only three agents.

\begin{theorem}\label{thm:hard-eq1-sc}
    For any $n\geq 3$, 
    there is no polynomial time algorithm approximating SC-EQ1 within an additive factor smaller than $\frac{1}{2}$ unless P = NP.
\end{theorem}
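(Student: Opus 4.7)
The plan is to establish hardness by a polynomial-time reduction from the \ssc{Partition} problem, in the spirit of the reduction used for Theorem~\ref{thm:hard-eqx-sc}, but tailored to exploit the one-item slack of EQ1 and to produce a constant additive gap rather than one growing with $n$. Given integers $\{p_1,\ldots,p_r\}$ with $\sum_j p_j = 2T$, I would construct a three-agent instance in which EQ1 essentially forces agents $1$ and $2$ to split a family of ``partition items'' between them; a \ssc{Partition}-solvable instance will admit a perfectly balanced split and hence a cheap EQ1 allocation, whereas a \ssc{Partition}-unsolvable instance will suffer an unavoidable imbalance that, amplified by a \emph{trigger} gadget, inflates the optimal EQ1 social cost by at least $\tfrac{1}{2}$ after normalization.

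The construction would use three groups of items. The first group is the partition items $O^2 = \{o^2_1,\ldots,o^2_r\}$ with $c_1(o^2_j) = c_2(o^2_j) = p_j$ and $c_3(o^2_j)$ set prohibitively large, so that any EQ1 allocation assigns all of $O^2$ between agents $1$ and $2$. The second group consists of calibrating items that pin agent $3$'s cost in any EQ1 allocation to a target value $T$ (achieved by giving agent $3$ one cheap item of cost $T$ and giving agents $1,2$ prohibitive cost for that item, together with further zero-cost fillers). The third group consists of a small set of trigger items, including in particular one item $o^\star$ with cost $\tfrac{K}{2}$ for agents $1$ and $2$ and cost $0$ for agent $3$ that acts as the swing variable, where $K$ is the overall normalization constant (so that each agent's total cost is $K$, later scaled to $1$).

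If the \ssc{Partition} instance is a YES-instance, I would exhibit the EQ1 allocation that gives agent $1$ the items $\{o^2_j\}_{j\in I_1}$ with cost $T$, agent $2$ the items $\{o^2_j\}_{j\in I_2}$ with cost $T$, agent $3$ her calibrating bundle of cost $T$, and routes $o^\star$ to agent $3$ (for whom it has cost $0$); the raw social cost equals $3T$. For a NO-instance, I would argue by a case analysis that any EQ1 allocation has agents $1$ and $2$ partitioning $O^2$ with an imbalance of at least $1$ in raw cost, and that EQ1 together with the calibrating items prevents $o^\star$ from being absorbed cheaply: the bundle of cost $T + \tfrac{K}{2}$ created by moving $o^\star$ onto agent $1$ or $2$ would violate EQ1 unless the partition-item imbalance is exactly zero. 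Hence in every NO-instance, the raw optimal EQ1 cost is at least $3T + \tfrac{K}{2} - o(K)$, and after normalizing every agent's total cost to $1$ the additive gap between the YES- and NO-instance optima tends to $\tfrac{1}{2}$ as $K \to \infty$.

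The main obstacle is neutralising the one-item exclusion inherent to EQ1. Because EQ1 permits discarding a single item when comparing bundles, an adversary could try to repair the $O^2$-imbalance in a NO-instance by excluding one partition item, or to swap $o^\star$ with a partition item. I would guard against the first by keeping every $p_j$ much smaller than $\tfrac{K}{2}$, so that removing one $p_j$ cannot close a $\tfrac{K}{2}$-size gap; and against the second by arranging the calibrating items so that the natural ``exclusion slot'' for agent $3$ is already consumed by $o^\star$ in the YES-allocation, leaving no further slack to exploit in the NO-allocation. Finally, I would extend from $n=3$ to arbitrary $n\geq 3$ by padding the instance with $n-3$ dummy agents, each assigned a single exclusive item of cost $T$ (with prohibitive cost elsewhere), so that these agents are automatically EQ1-compatible with the three-agent core and the additive gap of $\tfrac{1}{2}$ transfers unchanged.
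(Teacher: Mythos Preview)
Your reduction has a genuine gap: the NO-instance analysis is inverted and, once righted, the claimed cost separation disappears. You argue that placing $o^\star$ on agent~1 or~2 would violate EQ1 in a NO-instance; but that only forces $o^\star$ onto agent~3, where it costs $0$, which is exactly the \emph{cheap} outcome you need to rule out. Concretely, take the min-imbalance split of $O^2$ between agents~1 and~2, say with costs $T+\delta^*$ and $T-\delta^*$, and give agent~3 her calibrating item together with $o^\star$ (total cost $T$). This allocation is EQ1 even when $\delta^*>0$: by a standard swap argument the heavier side of any min-imbalance $2$-partition must contain some $p_j\ge 2\delta^*$ (otherwise moving that $p_j$ across strictly reduces the imbalance), so agent~1 can drop that item and reach cost $\le T-\delta^*$; agent~3 drops her calibrating item to reach cost $0$. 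Hence both YES- and NO-instances admit EQ1 allocations of social cost $3T$, and your construction yields no additive gap at all. Your safeguard ``$p_j\ll K/2$'' addresses the wrong threat: the imbalance to be absorbed is $2\delta^*$, not $K/2$, and one partition item always suffices for that.

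The paper's reduction avoids this by two devices you are missing. First, the ``trigger'' is a \emph{pair} of items $o_{r+1},o_{r+2}$ (not a single $o^\star$): since EQ1 lets the sink agent exclude only one of them, the other remains and must be compared against everyone else. Second, the partition items (and all non-trigger items) are scaled by $1/K$, so every agent other than the sink has cost $\Theta(T/K)$, and a simple counting argument shows that in a NO-instance some such agent has cost strictly below $T/K$; the sink agent, after excluding one trigger, still has cost $T/K$, so EQ1 fails. This forces one trigger onto a non-sink agent at cost $rT$, which after normalization by the common total $\approx 2rT$ yields the $\tfrac12$ gap. If you want to salvage your approach, you will need both ingredients: at least two equal-cost triggers for the sink agent, and costs for the remaining agents small enough that the unavoidable NO-instance shortfall beats the residual trigger cost.
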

\begin{proof}
      We present the reduction from the \ssc{Partition} problem. Given an arbitrary instance of \ssc{partition}, we construct a fair division instance $\cI$ with $n$ agents and a set $\{ o_1,\ldots, o_{r+n-1} \}$ of $r+n-1$ items.
      The cost functions of agents are shown in Table \ref{tab:eq1-hard-general} where $K$ is a sufficiently large number, i.e., $K\gg T$. 
      For any $i\in [n]$, we have $\sum_{j=1}^{r+n-1} c_i(o_j) = (2r+\frac{n-1}{K})T$ and thus the cost functions are normalized.

      \begin{table}[htbp]
  \centering
  \caption{The constructed fair division instance for Theorem \ref{thm:hard-eq1-sc}}
  \label{tab:eq1-hard-general}
    \begin{tabular}{|c|c|cccccccccc|}
    \hline
    \multicolumn{2}{|c|}{\multirow{2}[4]{*}{$c_i(o)$}} & \multicolumn{10}{c|}{Items} 
    \\
\cline{3-12}    \multicolumn{2}{|c|}{} & $o_1$ & $ o _2$ & $\cdots$ & $o_r$ & $o_{r+1}$ & $o_{r+2}$ & $o_{r+3}$ & $o_{r+4}$ & $\cdots$ & $o_{r+n-1}$ 
\\
    \hline
    \multirow{5}[2]{*}{Agents} & $1$ & $\frac{p_1}{K}$ & $ \frac{p_2}{K}$ & $\cdots$ & $ \frac{p_r}{K}$ & $ r T$ & $r T$ & $\frac{T}{K}$ & $\frac{T}{K}$&  $\cdots$  & $\frac{T}{K}$ \bigstrut[t]\\
          & $2$   & $\frac{p_1}{K}$ & $ \frac{p_2}{K}$ & $\cdots$ & $ \frac{p_r}{K}$ & $ r T$ & $r T$  & $\frac{T}{K}$ & $\frac{T}{K}$&  $\cdots$  & $\frac{T}{K}$\\
        & $\vdots$ & $\vdots$ & $\vdots$& $\ddots$& $\vdots$& $\vdots$& $\vdots$& $\vdots$& $\vdots$ & $\ddots$& $\vdots$ \\
        & $n-1$   & $\frac{p_1}{K}$ & $ \frac{p_2}{K}$ & $\cdots$ & $ \frac{p_r}{K}$ & $ r T$ & $r T$  & $\frac{T}{K}$ & $\frac{T}{K}$&  $\cdots$  & $\frac{T}{K}$\\
          & $n$ & $2T$ & $2T$ & $\cdots$ & $2T$ & $\frac{T}{K}$ & $\frac{T}{K}$ & $\frac{T}{K}$ & $\frac{T}{K}$&  $\cdots$  & $\frac{T}{K}$  \\
    \hline
    \end{tabular}%
\end{table}%
\noindent 

Suppose we have a ``yes'' instance of \ssc{partition} and let $I_1$ and $I_2$ be a solution. We now consider allocation $A = (A_1,\ldots, A_n)$ where $A_1 = \bigcup_{j\in I_1} o_j$, $A_2 = \bigcup_{j\in I_2} o_j$, $A_n = \{ o_{r+1}, o_{r+2} \}$ and for any $3\leq j \leq n-1$, bundle $A_j$ contains exactly one item from $\{o_{r+3},\ldots,o_{r+n-1}\}$. 
It is not hard to verify that allocation $A$ is EQ1 and achieves social cost $\SC(A) = \frac{(n+1)T}{K}$.
Moreover, $A$ is the optimal EQ1 allocation.

For the other direction, suppose we have a ``no'' instance of \ssc{partition}. We first claim that in any EQ1 allocation $B$, the bundle of agent $n$ does not include both $o_{r+1}$ and $o_{r+2}$. 
Due to ``no'' instance, assigning $\{ o_1,\ldots,o_r, o_{r+3},\ldots,o_{r+n-1} \}$ to the first $n-1$ agents makes one of them receive a cost less than $\frac{T}{K}$. 
If agent $n$ receives $\{ o_{r+1}, o_{r+2} \}$, then she violates the property of EQ1 when compared to the agent with the least cost.
Thus in allocation $B$, at least one of $o_{r+1}, o_{r+2}$ are assigned to first $n-1$ agents in allocation $B$. Accordingly for an EQ1 allocation $B$, its social cost $\SC(B) \geq (r + \frac{n}{K})T$.

By scaling the total cost of every agent to 1, we can claim that
it is NP-hard to approximate SC-EQ1 within an additive factor of
$$
\frac{r - \frac{1}{K}}{2r+\frac{n-1}{K}} \rightarrow \frac{1}{2}\textnormal{ as } K\rightarrow +\infty,
$$
which yields the desired inapproximation factor in the statement of Theorem \ref{thm:hard-eq1-sc}.
\end{proof}

Recall that by Theorem \ref{thm::eq1-sc-at-most-1}, Algorithm \ref{alg:eq1} can return an EQ1 allocation with social cost at most 1 in polynomial time.
Thus, Algorithm \ref{alg:eq1} can approximate SC-EQ1 within an additive factor of 1, which complements the above negative result.

We remark that the reduction in the proof of Theorem \ref{thm:hard-eq1-sc}
also implies that approximating SC-EQ1 within a multiplicative factor of $\frac{rK+n}{n+1}$ is NP-hard. 
As $K$ can be arbitrarily large, we can then claim that no polynomial time algorithm can have bounded multiplicative approximation guarantee.

\begin{corollary}
    For $n\geq 3$, there is no polynomial-time algorithm that has a bounded multiplicative approximation factor for SC-EQ1 unless $P = NP$.
\end{corollary}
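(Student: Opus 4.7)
The plan is to reuse verbatim the reduction from the proof of Theorem \ref{thm:hard-eq1-sc} and exploit the fact that the parameter $K$ can be chosen freely. Recall that on a ``yes'' \ssc{Partition} instance the reduction exhibits an EQ1 allocation of social cost exactly $\tfrac{(n+1)T}{K}$, whereas on a ``no'' instance \emph{every} EQ1 allocation was shown to have social cost at least $\bigl(r+\tfrac{n}{K}\bigr)T$. Dividing these two quantities gives a multiplicative gap of $\tfrac{rK+n}{n+1}$ between the optimal EQ1 social cost on ``no'' instances and on ``yes'' instances, and this ratio is unbounded in $K$.

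Now I would proceed by contradiction. Suppose some polynomial time algorithm $\mathcal{A}$ achieved a multiplicative approximation ratio of $\alpha\geq 1$ for SC-EQ1. Given an arbitrary \ssc{Partition} instance, I would pick an integer $K$ satisfying $K > \tfrac{\alpha(n+1)-n}{r}$ (which is polynomial in $\alpha$ and in the \ssc{Partition} input, hence has binary encoding of polynomial length), build the fair-division instance of Table \ref{tab:eq1-hard-general} using this $K$, and run $\mathcal{A}$. On a ``yes'' instance, $\mathcal{A}$ returns a feasible EQ1 allocation with social cost at most $\alpha\cdot\tfrac{(n+1)T}{K}$, which by the choice of $K$ is strictly less than $\bigl(r+\tfrac{n}{K}\bigr)T$. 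On a ``no'' instance, every EQ1 allocation---in particular $\mathcal{A}$'s output---has social cost at least $\bigl(r+\tfrac{n}{K}\bigr)T$. Comparing the output's cost against the threshold $\alpha\cdot\tfrac{(n+1)T}{K}$ therefore decides \ssc{Partition} in polynomial time, contradicting the assumption $P\neq NP$. Since $\alpha$ was arbitrary, no bounded multiplicative approximation can exist.

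I do not foresee any substantive obstacle: the entire argument is a rearrangement of bounds already established in Theorem \ref{thm:hard-eq1-sc}. The only point worth verifying is that the reduction remains polynomial when we inflate $K$ based on the target ratio $\alpha$, and this is immediate because $K$ grows at most polynomially in $\alpha,n,r$ so its binary representation, together with the cost entries built from $T/K$ and $p_j/K$, can be written down in polynomial time in the size of the \ssc{Partition} input (treating $\alpha$ as a fixed constant).
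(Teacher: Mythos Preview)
Your proposal is correct and follows exactly the paper's approach: the paper simply remarks that the same reduction from Theorem~\ref{thm:hard-eq1-sc} yields a multiplicative gap of $\frac{rK+n}{n+1}$ between the ``yes'' and ``no'' cases, which is unbounded as $K\to\infty$. You have fleshed out the details (the explicit choice of $K$ in terms of $\alpha$ and the polynomial-size check) that the paper leaves implicit, but the underlying argument is identical.
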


As for the case of two agents, it is also known that SC-EQ1 is NP-hard \cite{DBLP:journals/aamas/SunCD23}. We below present an additive approximation scheme for SC-EQ1 when $n=2$. 
For the choice of parameter $h$, we follow the work of \cite{DBLP:conf/wine/CaragiannisI21} and set $h$ as the total cost of an agent for all items, i.e., $h=1$.
Denote by $\OPT^{EQ1}$ the social cost of the optimal EQ1 allocation.

\begin{theorem}\label{thm:eq1-ptas}
    Let $\epsilon>0$ be an accuracy parameter. Given an instance of SC-EQ1 with two agents and $m$ items,
    the algorithm runs in time $O(m2^{\frac{2}{\epsilon}})$ and computes an EQ1 allocation with a social cost of at most $\OPT^{EQ1} + \epsilon$.
\end{theorem}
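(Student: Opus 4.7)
The plan is to design an algorithm following the classical additive-approximation-scheme paradigm: isolate a small number of ``big'' items whose costs are non-negligible, enumerate all possible assignments of these to the two agents, and complete each candidate with a simple greedy pass over the remaining ``small'' items.

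Specifically, I classify each item $o$ as \emph{big} if $c_1(o) + c_2(o) \geq \epsilon$ and as \emph{small} otherwise. Since $\sum_{o \in O}(c_1(o)+c_2(o)) = 2$, the set $B$ of big items satisfies $|B| \leq 2/\epsilon$, while every small item has cost less than $\epsilon$ for both agents. For each of the $2^{|B|} \leq 2^{2/\epsilon}$ partitions $(B_1, B_2)$ of $B$, the algorithm computes the starting costs $c_1(B_1)$ and $c_2(B_2)$, then processes the small items in a single $O(m)$ pass (for instance, in decreasing order of $\max\{c_1(o),c_2(o)\}$), assigning each small item to whichever agent currently has the smaller total cost; this produces a candidate allocation that is intended to be EQ1. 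The algorithm returns the candidate of minimum social cost across all $2^{2/\epsilon}$ enumerations, giving a total running time of $O(m\, 2^{2/\epsilon})$ as stated.

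The core of the analysis is to show that the enumeration guessing the big-item partition $(B_1^\star, B_2^\star)$ of an optimal EQ1 allocation $A^\star$ produces a completion whose social cost is at most $\OPT^{EQ1} + \epsilon$. I would argue this by an exchange argument: the disagreements between the greedy completion and $A^\star$ lie only in small items, and because every such item has cost less than $\epsilon$ for both agents, converting one assignment into the other via single-item swaps changes the total social cost by at most $\epsilon$. The balance guarantee of the ``least-loaded-first'' rule gives a final cost gap $|c_1(A_1) - c_2(A_2)|$ bounded by the maximum single-item cost in the more-loaded bundle, which is exactly the condition required for EQ1 when $n=2$.

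The main obstacle I expect is harmonizing the two competing goals of the small-item completion: keeping the bundles balanced (to enforce EQ1) and assigning each small item to its cheaper owner (to minimize social cost). These pull in different directions whenever one agent's small-item costs are systematically lower than the other's yet the big-items guess leaves her already more loaded. To resolve this, I would (i) run both a balance-oriented and an efficiency-oriented greedy variant per enumeration and keep the best EQ1 outcome, and (ii) if needed, also enumerate the identity of the EQ1-witness item on each side (adding only a polynomial-in-$m$ blow-up) so that the small-item completion becomes a one-dimensional balancing problem solvable in $O(m)$ time; this preserves the $O(m\, 2^{2/\epsilon})$ runtime up to lower-order factors. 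The residual technical step is then to carefully bound, via the exchange argument, the social-cost gap between the chosen completion and the small-item portion of $A^\star$.
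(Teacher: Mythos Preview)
Your proposal has a genuine gap in the social-cost analysis. The ``least-loaded-first'' greedy rule balances the two bundle costs but makes no attempt to minimize social cost: a small item $o$ with $c_1(o)\ll c_2(o)$ can be handed to agent~2 simply because her current load is lower, and this can happen for \emph{many} small items. Your exchange argument asserts that ``converting one assignment into the other via single-item swaps changes the total social cost by at most~$\epsilon$,'' but this is false: if the greedy completion disagrees with $A^\star$ on $k$ small items, the accumulated cost difference $\sum_{o}\lvert c_1(o)-c_2(o)\rvert$ over those items can be $\Theta(1)$, not $O(\epsilon)$. Running a second, efficiency-oriented greedy does not help either, since assigning every small item to its cheaper owner need not produce an EQ1 allocation, so neither variant is guaranteed to be simultaneously EQ1 and within~$\epsilon$ of $\OPT^{EQ1}$.

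The paper's proof supplies exactly the idea you are missing. It enumerates, in addition to the big-item partition, a designated witness item $o'$ for the heavier agent (this is your fix~(ii), and it is the source of the extra factor~$m$ in the running time). With $o'$ fixed, the EQ1 requirement becomes two \emph{linear} inequalities on the fractional small-item assignment, and minimizing social cost subject to them is a linear program with $k$ box constraints plus at most two further constraints. A basic optimal solution therefore has at most one fractional coordinate, and rounding that single small item costs at most~$\epsilon$. Your description of the completion as ``a one-dimensional balancing problem solvable in $O(m)$ time'' gestures at this but does not identify the LP or its near-integrality, which is the crux of why only one item's worth of error is incurred.
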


Our algorithm makes use of the PTAS idea for the classical scheduling problem of makespan minimization on identical machines \cite{1146b87e-3eb4-3402-aa54-16747a111282}.
We first partition the space of complete allocations $\mathcal{D}$ into $q$ districts $\mathcal{D}^{(1)}, \mathcal{D}^{(2)},\ldots, \mathcal{D}^{(q)}$ such that $\mathcal{D}^{(1)} \cup \cdots \cup \mathcal{D}^{(q)} = \mathcal{D}$
where $q$ is polynomial in $m$ and $\frac{1}{\epsilon}$. 
Then for each district $\mathcal{D}^{(l)}$, we compute
a representative allocation that is EQ1 and has a social cost close to the optimal objective value in $\mathcal{D}^{(l)}$.
At last, we pick the best representative among all districts.

We now present the algorithm in detail. Based on the accuracy parameter $\epsilon>0$, we classify items into two categories as follows: for any $o\in O$,
\begin{align*}
    \text{it is a \emph{big item} if } \exists i =1,2 \text{ such that } c_i(o)>\epsilon; \text{it is a \emph{small item} otherwise.}
\end{align*}
As the total cost of every agent is 1, the total number of big items is at most $\frac{2}{\epsilon}$. We define districts $\{D^{(l)}[i,o]\}$ as follows: two allocations $F^1, F^2$ lie in district $D^{(l)}[i,o]$ if $F^1$ allocates every big item to the same agent as $F^2$ and moreover item $o$ is allocated to $i$ in both $F^1$ and $F^2$.
Note that the requirement of the identical assignment of big items in $F^1, F^2$ does not violate the requirement of allocating item $o$ to agent $i$ no matter whether $o$ is a big item or a small item.
There are $2^{\frac{2}{\epsilon}}$ possible allocations of big items and at most $m$ options for the choice of the specific $o$ and two options for agent $i$.
Thus, the total number of districts is at most $2m2^{\frac{2}{\epsilon}}$.

We next describe how to find the representative within each district. Consider a fixed district $D^{(l)}[1,o']$ (the district for agent 2 is symmetric). For every allocation in this district, assignments of big items and item $o'$ have already been fixed while that of small items (possibly excepting $o'$) may vary.
Let $s_1,\ldots,s_k$ be small items;
note $k<m$. Let $a_1$ be the cost of agent 1 with respect to the fixed big items, excluding $o'$ if $o'$ is a big item, allocated to her 
and let $a_2$ be the cost of agent 2 with respect to the fixed big items allocated to her.
We determine the assignment of small items based on the optimal solution of the following linear programming. In the LP, variable $x_j$ refers to the fraction of $s_j$ allocated to agent 1. The objective function is the total cost of small items and
the first two constraints guarantee that (i) agent 1 receives less cost after removing $o'$ and (ii) agent 2 receives less cost than agent 1.
Note that we can get a complete (possibly fractional) allocation by extending the fixed partial assignment of big items and $o'$ according to the solution of the LP.
Moreover, the assignment of small items in those integral EQ1 allocations in district $\mathcal{D}^{(l)}[1,o']$, where $o'$ is the largest cost item in the bundle of agent 1 and the cost of agent 1 is at least that of agent 2, corresponds to feasible solutions of the LP.

$$
\centering
\begin{array}{ll}\min & \sum_{j=1}^k c_1(s_j)x_j + \sum_{j=1}^kc_2(s_j)(1-x_j)\\ \text { s.t. } & a_1+ \sum_{j=1}^k c_1(s_j)x_j \leq a_2 + \sum_{j=1}^kc_2(s_j)(1-x_j) \\ 
& a_2 + \sum_{j=1}^kc_2(s_j)(1-x_j) \leq  a_1 + c_1(o')+ \sum_{j=1}^k c_1(s_j)x_j \\ 
& 0\leq x_j \leq 1, \quad j=1, \ldots, k .
\end{array}
$$

We can compute in time polynomial to $m$ the optimal solution $\{x_j^*\}$ of the LP. If $c_1(o')=0$, the first two inequality constraint becomes an equality constraint. 
When $c_1(o')\neq 0$, the first two inequality constraints can not be fulfilled with equality at the same time.
As a consequence, since the solution $\{x_j^*\}$ is a basic feasible solution of the underlying polyhedron in $k$-dimensional space, at least $k-1$ values $x_j^*$ should be integral, i.e., at most one $x_j^*$ is fractional. 
To find the representative in $D^{(l)}[1,o']$, we allocate small items as follows: item $s_j$ is allocated to agent 1 if $x_j^*=1$ and is allocated to agent 2 if $x_j^*=0$.
If every $x^*_j$ is integral,
then we find the representative. 
If $\{x_j^*\}$ is not an integral solution, as we have already argued, there is at most one value $x_t^*$ being fractional.
We next describe how to allocate item $s_t$.
Define $E_1=\{s_j: j\in [k] \textnormal{ and } x_j^*=1\}$ and $E_2=\bigcup_{j\in[k]}s_j \setminus (E_1\cup s_t)$, that is, $E_1$ (resp. $E_2$) refers to the set of small items determined to be allocated to agent 1 (resp. agent 2) based on $\{x^*_j\}$. Up to here, the assignment of every big item, item $o'$ and small items excluding $s_t$ has been determined; recall that $o'$ is allocated to agent 1.

If $a_1+c_1(E_1) \geq a_2+c_2(E_2)$, then item $s_t$ is allocated to agent 2. {One can verify that this complete allocation is EQ1.}
We now prove that this complete allocation is EQ1. By the first constraint, we have $a_1+c_1(E_1)+c_1(s_t)x_t^* \leq a_2+c_2(E_2)+c_2(s_t)(1-x_t^*)$, implying $a_1+c_1(E_1) \leq a_2+c_2(E_2) + c_2(s_t)$.
Thus in the final allocation, agent 1 is EQ1 as after removing item $o'$, her cost is no greater than that of agent 2. 
As for agent 2, by the if-condition of $a_1+c_1(E_1)\geq a_2+c_2(E_2)$, her cost would be at most that of agent 1 after removing $s_t$. Therefore, the final allocation is EQ1.

If $a_1+c_1(E_1)<a_2+c_2(E_2)$, we compare $a_1+c_1(E_1)+c_1(o')$ and $a_2+c_2(E_2)$ and then allocate $s_t$ to the agent with a smaller cost; if $a_1+c_1(E_1)+c_1(o')>a_2+c_2(E_2)$, we allocate $s_t$ to agent 2 and otherwise allocate $s_t$ to agent 1.
The agent receiving $s_t$ satisfies EQ1 in the final allocation. Now we consider the agent not receiving $s_t$.
If agent 2 does not receive $s_t$, the second constraint of LP implies that agent 2 satisfies EQ1. 
If agent 1 does not receive $s_t$, the first constraint of LP indicates that after removing $o'$, the cost of agent 1 is less than that of agent 2 in the final allocation. Therefore, the final allocation is EQ1.

We now have found the representative for each district and have shown that the representative allocation is EQ1.
Suppose $B^*=(B^*_1,B^*_2)$ is the optimal EQ1 allocation 
and without loss of generality $c_1(B_1)\geq c_2(B_2)$. Moreover, item $o^* \in B_1$ satisfies $c_1(o^*) \geq c_1(o)$ for all $o\in B_1$.
Then by the construction of districts, allocation $B^*$ must lie in some district $\mathcal{D}^{(l^*)}[1,o^*]$. Let $A^{l^*}$ be the representative of $\mathcal{D}^{(l^*)}[1,o^*]$ and $z^{l^*}$ be the optimal objective value of the LP corresponding to $\mathcal{D}^{(l^*)}[1,o^*]$.
Then we have
$$
\SC(A^{l^*}) \leq a_1+a_2+z^{l^*} + \epsilon \leq \SC(B^*) + \epsilon.
$$
The first inequality transition is because $x_t^*$ is the only fractional value in $\{x_j^*\}$ and $s_t$ is a small item. 
The second inequality transition is because the assignment of small items (possibly excluding $o'$) of $B$ corresponds to a feasible solution of LP.
After visiting all the representatives, we pick the allocation with the minimum social cost. It is clear that the social cost of that allocation is no greater than $\SC(A^{l^*})$.
Therefore, the algorithm runs in time $O(m2^\frac{2}{\epsilon})$ and computes an EQ1 allocation with a social cost of at most $\OPT^{EQ1} +\epsilon$. 




\section{The Social Cost of EF1 Allocations}

\label{sec:EF1}

Our last fairness criterion is EF1. As we will show the results are positive and similar as EQ1 allocations. 

\subsection{Bounding the Cost of  EF1}

\begin{theorem}\label{thm::ef1-sc-at-most-1}
    The cost of EF1 is at most $1$ and at least $1-\frac{1}{n}$.
\end{theorem}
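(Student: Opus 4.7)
The plan is to mirror the approach used for EQ1 in Theorems~\ref{thm::eq1-sc-at-most-1} and~\ref{thm::cof-eq1}, adapted to the EF1 comparison. For the upper bound, I would design an algorithm analogous to Algorithm~\ref{alg:eq1} that starts from the optimal allocation $S^*$ and iteratively reassigns items until the result is EF1. The only structural change is the definition of the directed graph: let $(i,j)\in E_A$ iff $c_i(A_i\setminus\{o\})>c_i(A_j)$ for every $o\in A_i$, i.e., agent $i$ EF1-envies agent $j$. Partition the agents into $N_0$, $N_1$, $N_2$ exactly as in~(\ref{eq:GA:N}). In each round, pick $i^*\in N_1$ maximizing $\min_{o\in A_{i^*}}c_{i^*}(A_{i^*}\setminus\{o\})$, choose some $o'\in A_{i^*}$, and reassign $o'$ to $j^*\in\arg\min_{j\in N_2}c_j(o')$, then update the graph and the partition.

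The main steps are, first, establishing the EF1-analogues of Lemmas~\ref{lem::eq1-1} and~\ref{lem::eq1-2}---once an agent enters $N_0$ she stays there, and any vertex with in-degree zero in $G_{S^*}$ keeps in-degree zero throughout---and second, using these invariants to conclude that when $N_1=\emptyset$ the allocation is EF1 and that the algorithm terminates in polynomial time with social cost at most~$1$. For the cost bound, let $k$ be an agent that remains in $N_2$ in every round (whose existence follows from the EF1-analogue of Lemma~\ref{lem::eq1-2}); by the choice of $j^*$, every item $o$ that is moved from its bundle in $S^*$ lands on an agent whose cost on $o$ is at most $c_k(o)$. Summing over moved and unmoved items exactly as in the proof of Theorem~\ref{thm::eq1-sc-at-most-1} gives $\SC(A)\le c_k(O)=1$, hence the cost of EF1 is at most~$1$.

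For the lower bound, I would reuse the instance from Theorem~\ref{thm::cof-eq1}: agent $1$ has cost $\epsilon$ on each of $n$ ordinary items $o_1,\ldots,o_n$ and cost $1-n\epsilon$ on the special item $o_{n+1}$, while every agent $i\ge 2$ has cost $1/n$ on each ordinary item and $0$ on $o_{n+1}$. Then $\OPT=n\epsilon\to 0$. The key observation is that agent $1$ cannot retain too many ordinary items in any EF1 allocation: as soon as $|A_1\cap\{o_1,\ldots,o_n\}|$ grows, pigeonhole forces some agent $j\ne 1$ to hold a bundle that agent $1$ values strictly below $(|A_1|-1)\epsilon$, violating her EF1 constraint. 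A case analysis on the feasible sizes of $|A_1|$, combined with the fact that every leftover ordinary item held by some $i\ge 2$ costs $1/n$, should yield $\SC(A)\ge (n-1)/n-O(\epsilon)$ for every EF1 allocation $A$, matching the $1-1/n$ lower bound as $\epsilon\to 0$. The main obstacle will be the EF1-analogue of Lemma~\ref{lem::eq1-2}: in the EQ1 proof, Claim~\ref{claim::ak1} chains inequalities $c_x(A_x)>c_y(A_y)>c_z(A_z)$ across different cost functions, whereas for EF1 both endpoints of an arc use the single function $c_i$, so the transitivity chain must be re-derived via $c_i(A_j)>c_i(A_k)$. I expect to handle this by carefully comparing $c_{i^*}(o')$ with $c_k(o')$ for the tracked agent $k\in N_2$ at each reassignment.
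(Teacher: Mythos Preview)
Your plan has genuine gaps on both halves of the theorem.

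\textbf{Upper bound.} The EF1-analogues of Lemmas~\ref{lem::eq1-1} and~\ref{lem::eq1-2} are false, so the termination and cost arguments you intend to port from Algorithm~\ref{alg:eq1} break down. Here is a concrete counterexample with three agents and four items $a,b,c,d$:
\[
c_1=(0.05,\,0.3,\,0.3,\,0.35),\quad c_2=(0.08,\,0.1,\,0.4,\,0.42),\quad c_3=(0.45,\,0.15,\,0.2,\,0.2).
\]
The optimal allocation is $S^*=(\{a\},\{b\},\{c,d\})$. In $G_{S^*}$ the only arc is $(3,2)$, so $N_0=\{1\}$, $N_1=\{3\}$, $N_2=\{2\}$. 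Your algorithm picks $i^*=3$ and moves, say, $d$ to agent~$2$. In the new allocation $(\{a\},\{b,d\},\{c\})$ agent~$2$ now EF1-envies agent~$1$, since $\min\{c_2(b),c_2(d)\}=0.1>0.08=c_2(a)$. Thus agent~$1$ leaves $N_0$ and enters $N_2$, violating both lemmas simultaneously. The reason the EQ1 argument does not transfer is exactly the asymmetry you flagged: for EQ1 the arc $(j,i)$ depends on $c_i(A_i)$, but for EF1 it depends on $c_j(A_i)$, and shrinking $A_{i^*}$ or enlarging $A_{j^*}$ can create envy toward a previously ``safe'' agent measured in a third party's cost function. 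Your proposed fix of comparing $c_{i^*}(o')$ with $c_k(o')$ cannot repair this, because the new arc in the example is created by the \emph{recipient} $j^*$, not by $i^*$, and involves $c_{j^*}$ rather than $c_{i^*}$ or $c_k$. Without the two lemmas you lose both the existence of a fixed $k\in N_2$ throughout (needed for the $\SC(A)\le c_k(O)=1$ bound) and the polynomial termination guarantee. The paper takes an entirely different route: run round-robin under each of the $n$ cyclic shifts of the agent order and output the cheapest; an averaging argument over the $n$ shifts shows one of them has social cost at most~$1$.

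\textbf{Lower bound.} Reusing the EQ1 instance with $n+1$ items yields only $1-\tfrac{2}{n}$, not $1-\tfrac{1}{n}$. In that instance the allocation $A_1=\{o_1,o_2\}$, $A_2=\{o_{n+1}\}$, $A_j=\{o_j\}$ for $j\ge 3$ is EF1 (agent~$1$ removes one ordinary item and is left with cost $\epsilon=c_1(o_j)$ for every $j\ge3$), and its social cost is $(n-2)/n+2\epsilon$. So agent~$1$ can keep two ordinary items, not one. The paper instead uses $K\gg n$ identical small items so that the EF1 balance constraints force agent~$1$ to hold at most roughly a $\tfrac{1}{n}$ fraction of them, which is what pushes the bound up to $1-\tfrac{1}{n}$.
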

    

The high-level idea of the proof of Theorem \ref{thm::ef1-sc-at-most-1} is that we implement the round-robin algorithm on $n$ different orderings of the agents. 
It is clear that all these $n$ algorithms return EF1 allocations, and we further prove that one of them returns an allocation with a social cost of at most 1.
    \begin{proof}[Proof of Theorem \ref{thm::ef1-sc-at-most-1}]
        We first consider the upper bound and show that there always exist EF1 allocations with social cost at most 1.
        Given an agent $i$ with cost function $c_i$, suppose $c_i(o_{i_1})\le c_i(o_{i_2}) \le \cdots \le c_i(o_{i_m})$.
        We partition $O$ into $n$ bundles in a round-robin way, i.e., $M^i_k = \{o_i, o_{n+i}, o_{2n+i}, \ldots\}$ for $k = 1, \ldots, n$.
        It is clear that 
        \begin{equation}\label{eq:ef1:cof:1}
            c_i(M^i_1) + c_i(M^i_2) + \cdots + c_i(M^i_n) = 1, \text{ for any $i\in N$.}
        \end{equation}
        Moreover, in a round-robin algorithm where agent $i$ is the $k$-th to select items, her cost is no greater than $c_i(M^i_k)$. This is because in any round $l\ge 1$ of the algorithm, $l\cdot n + i - 1$ items have been allocated at the moment when it is $i$'s turn to select an item and thus this item's cost cannot be greater than $c_i(o_{i_{l\cdot n + i}})$.
        
        Next, consider $n$ permutations of the agents: 
        \begin{table}[H]
            \centering
            \begin{tabular}{ccccccc}
                $\sigma_1$ & = & $(1$, & $2$, &$\ldots$ &$n-1$,& $n)$\\
                $\sigma_2$ & = & $(2$, & $3$, &$\ldots$ &$n$, &$1)$ \\
                &\vdots\\
                $\sigma_n$ & = & $(n$, & $1$, &$\ldots$ &$n-2$, & $n-1)$.
            \end{tabular}
        \end{table}
        \noindent Summing up the $n$ equalities in Equation (\ref{eq:ef1:cof:1}) and rearranging the terms according to the permutations $\sigma_1, \ldots, \sigma_n$ gives 
        \[
        \sum_{k=1}^n \sum_{l=1}^n c_{\sigma_k(l)}(M^{\sigma_k(l)}_{l}) = n.
        \]
        Thus
        \[
        \min_{\sigma_1, \ldots, \sigma_n} \sum_{i=1}^n c_{\sigma_k(l)}(M^{\sigma_k(l)}_{l}) \le 1.
        \]
        Let $A^{k}=(A^{k}_1,\ldots, A^{k}_n)$ be the allocation returned by the round-robin algorithm where the agents are ordered by $\sigma_k$.
        As we have discussed, $c_{\sigma_k(l)}(A^{k}_{l}) \le c_{\sigma_k(l)}(M^{\sigma_k(l)}_{l})$ for $l=1,\ldots, n$.
        Therefore,
        \[
        \min_{\sigma_1, \ldots, \sigma_n} \sum_{i=1}^n c_{\sigma_k(l)}(A^{k}_{l}) \le 1.
        \]
        That is, the allocation with the minimum cost among the $n$ allocations returned by the $n$ round-robin algorithms has social cost no greater than 1. 
        Moreover, such an allocation can be found in polynomial time.

        As for the lower bound, let us consider an instance with $n$ agents and a set $\{o_1,\ldots,o_{K+1}\}$ of $K+1$ items, where $K$ is a large number and divides $n$.
        The cost functions $c_i$'s are as follows;
        \begin{itemize}
            \item for agent 1: $c_1(o_j)=\epsilon$ for all $j\leq K$ and $c_1(o_{K+1})=1-K\epsilon$;
            \item for agent $i\geq 2$: $c_i(o_j)=\frac{1}{K}$ for all $j\leq K$ and $c_i(o_{K+1}) = 0$;
        \end{itemize}
        where $0<\epsilon\ll \frac{1}{K}$.

        In an optimal allocation $S^*=(S^*_1,\ldots,S^*_n)$, agent 1 receives $S^*_1=\{o_1,\ldots,o_K\}$ and agent 2 receives $o_{K+1}$, resulting in $\SC(S^*)=K\epsilon$. However, agent 1 violates EF1 in $S^*$ when compared to agent $j\geq 3$. In the following, we will show that the social cost of the optimal EF1 allocation $B=(B_1,\ldots,B_n)$ is close to $1-\frac{1}{n}$.
        It is not hard to verify that in the optimal EF1 allocation, $o_{K+1}$ is not allocated to agent 1.
        As the cost functions of agents $2,\ldots,n$ are identical, we can assume that $o_{K+1}$ is allocated to agent 2 in $B$ due to symmetry.
        Then for every $i\in[n]$, let $x_i$ refers to the number of the first $K$ items allocated to agent $i$.
        Since $\epsilon \ll \frac{1}{K}$, the upper bound of $x_1$ implies the lower bound of the social cost of $B$.
        The EF1 condition among agents $2,\ldots,n$ requires that for any $i,j=2,\ldots,n$, $|x_i-x_j| \leq 1$;
        EF1 condition among agents $1,3,\ldots,n$ requires that for any $j=3,\ldots,n$, $|x_1-x_j| \leq 1$;
        EF1 condition between agents 1 and 2 requires $x_2-x_1\leq 1$.
        Given these constraints, we have $x_1\leq \frac{K+n}{n}$ as $\sum_{j=1}^n x_j=K$.
        Then we have $\sum_{j=2}^nx_j \geq \frac{(n-1)K-n}{n}$ and thus
        $\SC(B)\geq \frac{n-1}{n}-\frac{1}{K}$.
        Therefore,
        the cost of fairness regarding to EF1 is at least $\frac{n-1}{n}-\frac{1}{K} - K\epsilon$,
        close to $1-\frac{1}{n}$ when $\epsilon \ll \frac{1}{K}\ll \frac{1}{n}$.
 \end{proof}

\subsection{Computing the Optimal EF1 Allocations}
Denote by SC-EF1 the optimization problem of minimizing social cost subject to EF1 constraint. We below first show that SC-EF1 is hard to approximate within a constant when $n\geq 3$. Then for the case of two agents, we present an additive approximation scheme.


\begin{theorem}\label{thm:hard-ef1-sc}
     For any $n\geq 3$, there is no polynomial time algorithm approximating SC-EF1 within an additive factor smaller than $\frac{n-1}{2n+2}$ unless P = NP.
\end{theorem}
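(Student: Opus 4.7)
The plan is to reduce from PARTITION, following the same general template as the SC-EQ1 hardness proof (Theorem \ref{thm:hard-eq1-sc}) but tailored to envy-freeness. Given an arbitrary PARTITION instance $\{p_1,\ldots,p_r\}$ with $\sum_j p_j = 2T$, I would construct an SC-EF1 instance with $n$ agents and roughly $r + \Theta(n)$ items, split into three categories: $r$ ``partition items'' whose costs encode $p_1,\ldots,p_r$ for agents $1$ and $2$, a small number of ``anchor'' items calibrated to force structural properties in any EF1 allocation, and filler items for agents $3,\ldots,n$. A large normalization parameter $K \gg T$ will be chosen so that the ratios collapse in the limit to the exact factor $\frac{n-1}{2n+2}$.

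The cost functions would follow the template of Table \ref{tab:eq1-hard-general}: agents $1$ and $2$ have identical tiny costs $p_j/K$ on the partition items and high costs on the anchor items, while agents $3,\ldots,n$ have huge costs on the partition items and tiny costs on the anchor items. The YES direction is then direct: given a valid bipartition $(I_1,I_2)$, assign $\{o_j : j \in I_1\}$ to agent $1$, $\{o_j : j \in I_2\}$ to agent $2$, and distribute anchor/filler items among the remaining agents. Agents $1$ and $2$ share identical valuations on partition items and each receives $c_i$-cost exactly $T/K$, so no envy arises between them; the other agents receive only items they value at $T/K$, giving a social cost of $o(1)$ after scaling.

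The NO direction is where the real work lies. I would show that in every EF1 allocation, either (i) the partition items are split unequally between agents $1$ and $2$ so that the poorer-off agent envies the other even after removing a single $p_j/K$ item (which is possible to repair only by dumping a full partition item onto some agent $i\ge 3$, incurring $\Theta(T)$ cost), or (ii) at least one anchor item is forced onto agent $1$ or $2$, incurring a direct $rT$-type penalty. A careful case analysis, analogous to the $\textsc{yes}/\textsc{no}$ analysis in the proof of Theorem \ref{thm:hard-eq1-sc}, should give that the optimum EF1 social cost in the NO case is at least $(n-1)T$ up to $o(T)$ terms, while the per-agent total cost (the normalizer) is $(2r+\tfrac{n-1}{K})T$ up to lower-order corrections. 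Taking $K\to\infty$ drives the normalized gap to $\frac{n-1}{2n+2}$, establishing the claimed additive inapproximability.

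The main obstacle will be the case analysis in the NO direction: EF1 lets each agent drop one item when comparing bundles, so the anchor items must be designed so that their removal via the up-to-one-item clause still does not allow agents $1$ and $2$ to escape envy cheaply, and simultaneously the partition items must be genuinely unattractive to agents $3,\ldots,n$ even after an analogous up-to-one-item drop. Achieving the precise factor $\frac{n-1}{2n+2}$ rather than a weaker constant requires tuning the anchor-item costs so that exactly $n-1$ of the $n$ agents are ``locked'' into a high-cost bundle in every EF1 allocation under a NO answer. As an immediate byproduct, the same reduction with $K\to\infty$ should also rule out any bounded multiplicative approximation for SC-EF1, mirroring Corollary \ref{cor:EQX:multiplicative}.
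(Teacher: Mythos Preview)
Your plan has a structural gap in the NO direction that makes the reduction collapse. You propose that agents $1$ and $2$ carry the partition items with identical costs $p_j/K$, and that in a NO instance the resulting imbalance between their bundles creates an EF1 violation. But EF1 between two agents with \emph{identical} cost functions does not require a balanced split: if $c_1(A_1)>c_1(A_2)$, agent $1$ need only remove her single most expensive item, and a simple greedy (sorted-descending, alternating) allocation always keeps the gap within one item. So even for a NO \textsc{Partition} instance, agents $1$ and $2$ can be made EF1 toward each other at negligible cost, and nothing forces an anchor item onto them. Your case (i) therefore does not arise, and the claimed lower bound of $(n-1)T$ on the NO-case social cost is unsupported; indeed, plugging $(n-1)T$ over your normalizer $(2r+\tfrac{n-1}{K})T$ does not even limit to $\tfrac{n-1}{2n+2}$ (it depends on $r$).

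The paper avoids this by a genuinely different construction. It reduces from Partition-into-$(n-1)$-sets, gives the $q$ partition items to agents $1,\ldots,n-1$, and makes agent $n$'s costs \emph{proportional} across all items (factor $\approx \tfrac{2}{n+1}$). Agent $n$ receives the two anchor items; her EF1 constraint toward each agent $i\le n-1$ reads $c_n(A_n\setminus\{o\})\le c_n(A_i)$, which forces $\sum_{j\in I_i}p_j\ge T$ for every $i$. With total sum $(n-1)T$ this is exactly the balanced $(n-1)$-partition, so a NO answer pushes at least one anchor onto some agent $i\le n-1$, incurring cost $T$; the normalized gap is then $\tfrac{1-2/(n+1)}{2}=\tfrac{n-1}{2n+2}$. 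Finally, your remark that the same reduction gives unbounded multiplicative hardness is also incorrect: in the YES case here the social cost is $\Theta(T)$, not $o(1)$, so the multiplicative ratio stays bounded; the paper uses a separate construction (Theorem~\ref{thm:hard-multi-ef1}) for that.
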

\begin{proof}
    We will reduce from the problem of \emph{Partition-into-$k$-sets}: given a fixed number $k$ and a set $\{ p_1,\ldots, p_q \}$ of $q$ positive integers whose sum is $kT$, can $[q]$ be partitioned into $k$ indices sets $I_1,\ldots,I_k$ such that for every $j\in[k]$, $\sum_{ j\in I_j} p_j = T$?
It is easy to see that Partition-into-$k$-sets is NP-complete for every $k\geq 2$, via a polynomial time reduction from \ssc{partition}.
Starting from the \ssc{partition} instance, we add $k-2$ integers each of value $\text{SUM}/2$ where SUM refers to the total value of all elements of the \ssc{partition} instance. 
We have finished the construction of the Partition-into-$k$-sets instance. One can verify that the Partition-into-$k$-sets instance is a ``yes'' instance if and only if the \ssc{partition} instance is a ``yes'' instance.

    We present the reduction from the Partition-into-$(n-1)$-sets problem, known to be NP-complete based on the above argument.
    Given an arbitrary instance of Partition-into-$(n-1)$-sets, we construct a fair division instance $\cI$ with $n$ agents and a set $\{o_1,\ldots,o_{q+r}\}$ of $q+2$ items. The cost functions of agents are shown in Table \ref{tab:ef1-hard-f} where $K$ is a sufficiently large number, i.e., $K\gg T$.
    For any $i\in [n]$, we have $\sum_{j=1}^{q+2}c_i(o_j) = (\frac{n-1}{K}+2)T$, and thus, cost functions are normalized.
       \begin{table}[htbp]
  \centering
  \caption{The constructed fair division instance in Theorem \ref{thm:hard-ef1-sc}}
  \label{tab:ef1-hard-f}
    \begin{tabular}{|c|c|cccccc|}
    \hline
    \multicolumn{2}{|c|}{\multirow{2}[4]{*}{$c_i(o)$}} & \multicolumn{6}{c|}{Items} 
    \\
\cline{3-8}    \multicolumn{2}{|c|}{} & $o_1$ & $ o _2$ & $\cdots$ & $o_q$ & $o_{q+1}$ & $o_{q+2}$ \bigstrut\\
    \hline
    \multirow{4}[2]{*}{Agents} & $1$ & $\frac{p_1}{K}$ & $ \frac{p_2}{K}$ & $\cdots$ & $ \frac{p_q}{K}$ & $  T$ & $ T$   
    \\
          & $2$   & $\frac{p_1}{K}$ & $ \frac{p_2}{K}$ & $\cdots$ & $ \frac{p_q}{K}$ & $ T$ & $ T$   \\
          & $\vdots$ &  $\vdots$ & $\vdots$ & $\ddots$ & $\vdots$ & $\vdots$ & $\vdots$  \\
          & $n-1$ & $\frac{p_1}{K}$ & $\frac{p_2}{K}$ & $\ldots$ & $\frac{p_q}{K}$ & $T$ & $T$ \\
          & $n$ & $\frac{2K+n-1}{Kn+K}p_1$ & $\frac{2K+n-1}{Kn+K}p_2$ & $\cdots$ & $\frac{2K+n-1}{Kn+K}p_q$ & $\frac{2K+n-1}{Kn+K}T$ & $\frac{2K+n-1}{Kn+K}T$   \\
    \hline
    \end{tabular}%
\end{table}%

Suppose we have a ``yes'' instance of Partition-into-$(n-1)$-sets. Let $I_1,\ldots,I_{n-1}$ be a solution. 
We now consider the allocation $A=(A_1,\ldots,A_n)$ where for any $i\in[n-1]$, $A_i=\bigcup_{j\in I_i} o_j$ and $A_n=\{o_{q+1},o_{q+2}\}$.
For every agent $i\in [n-1]$, it is not hard to verify that agent $i$ does not envy other agents. For agent $n$, since $\min_{o\in A_n} c_n(A_n\setminus\{o\}) = \frac{2K+n-1}{Kn+K}T=c_n(A_i)$ for all $i\in [n-1]$, she also satisfies the condition of EF1 (and hence $A$ is EF1).
Moreover allocation $A$ also achieves the minimum social cost $\frac{n-1}{K}T+\frac{4K+2n-2}{Kn+K}T$.

For the other direction, suppose we have ``no'' instance. We first claim that in an EF1 allocation, the bundle of agent $n$ does not include both $o_{q+1}, o_{q+2}$.
Let $\{S_1,\ldots,S_{n-1}\}$ be an arbitrary $(n-1)$-partition of $\{ o_1,\ldots, o_q \}$. Due to ``no'' instance, $\min_{j\in[n-1]} c_n(S_j) < \frac{2K+n-1}{Kn+K} T $ holds. Accordingly, even when agent 3 receives only two items $o_{r+1}, o_{r+2}$, she violates the property of EF1 as
$\min\{c_n(o_{q+1}), c_n(o_{q+2})\} > \min_{j\in[n-1]}c_n(S_j)$.
Thus, in every EF1 allocation, at least one of $ o_{r+1}, o_{r+2} $ is not allocated to agent $n$ and therefore the social cost of an EF1 allocation is at least $\frac{n-1}{K}T + \frac{2K+n-1}{Kn+K}T + T$.

By scaling the total cost of every agent to 1, we can claim that
it is NP-hard to approximate SC-EF1 within an additive factor of
$$
\frac{1-\frac{2K+n-1}{Kn+K}}{\frac{n-1}{K}+2} \rightarrow \frac{n-1}{2n+2} \textnormal{ as } K\rightarrow +\infty,
$$
which yields the desired inapproximation factor.
\end{proof}

For the case of two agents, we also show the hardness of SC-EF1.

\begin{theorem}\label{thm:hard-ef1-sc-n=2}
    When $n=2$, SC-EF1 is NP-hard.
\end{theorem}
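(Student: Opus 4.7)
The plan is to prove Theorem \ref{thm:hard-ef1-sc-n=2} by a polynomial-time reduction from \ssc{Partition}, adapted to two agents since the $(n-1)$-set partition gadget used in Theorems \ref{thm:hard-eq1-sc} and \ref{thm:hard-ef1-sc} collapses at $n=2$. Given a \ssc{Partition} instance $\{p_1,\ldots,p_r\}$ with $\sum p_j = 2T$, I would construct a 2-agent instance $\cI$ with $r + O(1)$ items. The $r$ partition items would receive small, symmetric costs $c_1(o_j) = c_2(o_j) = p_j/K$ for a sufficiently large parameter $K$, and I would add a constant number of asymmetric gadget items (one cheap for agent 1 and expensive for agent 2, and vice versa) designed both to (a) pin down a unique ``target'' placement in the optimum and (b) make the EF1 envy, after removing the dominant gadget item from each bundle, scale like $|2S_1 - 2T|/K$, where $S_1 = c_1(A_1 \cap \{o_1,\ldots,o_r\})$.

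The proof would then proceed in three steps. First, I would identify a target social cost $C^\star$ and show that any allocation whose gadget placement differs from the target has $\SC$ strictly larger than $C^\star$ by $\Omega(T)$, so such allocations can be discarded. Second, with the gadget placement fixed, the EF1 condition reduces to a bound forcing $S_1$ to equal $T$ up to an $O(1/K)$ error; choosing $K$ large relative to the integer granularity of the $p_j$'s makes this bound equivalent to $S_1 = T$ exactly. Third, in the forward direction, a \ssc{Partition} solution $(I_1,I_2)$ gives an EF1 allocation of cost exactly $C^\star$, while in the backward direction, a ``no'' \ssc{Partition} instance rules out $S_1 = T$ and hence every EF1 allocation has $\SC > C^\star$.

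The main obstacle, and where most of the care will be needed, is the ``up-to-one-item'' slack of EF1 for two agents: a single sufficiently large item can absorb a large envy, so EF1 typically admits a wide range of partition splits and not only the exact one. To defeat this, the gadget costs must be chosen so that the unique maximum-cost item in each intended bundle is the gadget itself; after hypothetically removing it, the residual EF1 slack in each bundle is only $O(1/K)$, which I would set below the smallest positive gap $|2S_1 - 2T|$ achievable from any non-partition subset sum. Verifying this quantitative inequality, together with checking that no alternative gadget placement sneaks under $C^\star$, is the crux of the argument and mirrors the tight accounting used in the proofs of Theorems \ref{thm:hard-eq1-sc} and \ref{thm:hard-ef1-sc}, but adapted to the tighter two-bundle setting.
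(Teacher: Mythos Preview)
There is a real gap in your plan. With symmetric costs $c_1(o_j)=c_2(o_j)=p_j/K$ on the partition items, the social cost of any allocation equals $2T/K$ plus a term that depends only on where the gadgets sit, so your threshold $C^\star$ can pin down the gadget placement but imposes no constraint whatsoever on the split $(S_1,S_2)$; EF1 must do all the work. For the two EF1 inequalities to force $S_1=S_2=T$ you need, after each owner removes her dominant gadget, residual slack below $2/K$: if $g_i$ is the gadget in $A_i$, agent~$1$'s condition reads $S_1-S_2\le K\,c_1(g_2)$, so you need $c_1(g_2)<2/K$, and symmetrically $c_2(g_1)<2/K$. Dominance, on the other hand, requires $c_i(g_i)>\max_j p_j/K$. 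Hence each $g_i$ must be expensive for its owner and essentially free for the other agent, which is the \emph{opposite} of your own description (``cheap for agent~1 and expensive for agent~2'') and, more importantly, the opposite of the welfare-optimal placement you rely on in step~(a): swapping $g_1$ and $g_2$ drops the gadget contribution to the social cost from two large numbers to below $4/K$, and in the swapped placement each agent sees a huge item in the other bundle, so EF1 is easy to satisfy for a wide range of partition splits. An EF1 allocation with the swapped gadget placement therefore exists below $C^\star$ whether or not the \ssc{Partition} instance is a ``yes'', and the reduction collapses. Adding further gadgets does not obviously rescue this: any gadget that is dominant in its owner's bundle and yet sits there in the social optimum must also be at least as expensive for the other agent, which then inflates the other side's EF1 slack well above $O(1/K)$.

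The paper sidesteps this tension with a different device: it makes the partition-item costs \emph{asymmetric}, $c_1(o_j)=p_j$ versus $c_2(o_j)=p_j/2$. Shifting a partition item from agent~$2$ to agent~$1$ now raises the social cost by $p_j/2$, so the cost threshold $W$ itself already enforces $S_1\le T$. Three gadgets with costs $(6K,2K,0)$ for agent~$1$ and $(5K,\tfrac{3K+T}{2},\tfrac{3K+T}{2})$ for agent~$2$ are then tuned so that, once $W$ fixes their placement, agent~$2$'s EF1 inequality after removing her single $5K$ item reduces exactly to $S_1\ge T$. One direction comes from the cost budget and the other from a single EF1 comparison; together they pin $S_1=T$. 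This one-sided use of EF1, enabled by asymmetric small-item costs, is the idea your outline is missing.
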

\begin{proof}
    It suffices to prove NP-completeness of the decision version of SC-EF1: given a fair division instance and a threshold value $W$, does there exist an EF1 allocation with social cost at most $W$? 
    
    We present the reduction from the \ssc{partition} problem. Given an arbitrary instance of \ssc{partition}, we construct a fair division instance $\cI$ of 2 agents and a set $\{ o_1,\ldots, o_{r+3} \}$ of $r+3$ items. The cost functions of agents are shown in Table \ref{tab:ef1-hard-n=2} where $K$ is a sufficiently large number, i.e., $K\gg T$. The cost functions are normalized as for every $i\in [2]$, $\sum_{j=1}^{r+3}c_i(o_j)=2T+8K$.
    We define the threshold value $W$ as $W = \frac{13}{2}K + 2T$.

    \begin{table}[htbp]
  \centering
  \caption{The constructed fair division instance in Theorem \ref{thm:hard-ef1-sc-n=2}}
  \label{tab:ef1-hard-n=2}
    \begin{tabular}{|c|c|ccccccc|}
    \hline
    \multicolumn{2}{|c|}{\multirow{2}[4]{*}{$c_i(o)$}} & \multicolumn{7}{c|}{Items} 
    \\
\cline{3-9}    \multicolumn{2}{|c|}{} & $o_1$ & $ o _2$ & $\cdots$ & $o_r$ & $o_{r+1}$ & $o_{r+2}$ & $o_{r+3}$ \bigstrut\\
    \hline
    \multirow{2}[2]{*}{Agents} & $1$ & $p_1$ & $ p_2$ & $\cdots$ & $ p_r$ & $  6K$ & $ 2K$   & 0 
    \\
          & $2$   & $\frac{p_1}{2}$ & $ \frac{p_2}{2}$ & $\cdots$ & $ \frac{p_r}{2}$ & $ 5K$ & $ \frac{3K+T}{2}$ & $\frac{3K+T}{2}$ \\
    \hline
    \end{tabular}%
\end{table}%

Suppose we have a ``yes'' \ssc{partition} instance and let $I_1$ and $I_2$ be a solution. We now consider the allocation $A=(A_1, A_2)$ where $A_1 =\{ o_{r+3}\} \cup  \bigcup_{j \in I_1} o_j$ and $A_2 = \{ o_{r+1}, o_{r+2} \} \bigcup_{j \in I_2} o_j $.
As $c_1(A_1) < c_1(A_2)$ and $c_2(A_2\setminus \{o_{r+1} \}) = c_2(A_1)$, allocation $A$ is EF1 and moreover has social cost $\SC(A)= \frac{13}{2}K + 2T = W$.

We now prove the other direction. Suppose we have a ``no'' instance of \ssc{partition}. Let $B=(B_1,B_2)$ be an allocation with $\SC(B) \leq \frac{13}{2}K + 2T$. We will show that allocation $B$ can never be EF1.
Since $\SC(B) \leq \frac{13}{2}K + 2T$, it must hold that $o_{r+3} \in B_1$ and $o_{r+1}, o_{r+2} \in B_2$.
Moreover as $\SC(B) \leq W$, agent 1 can receive a cost of at most $T$ from the assignment of $o_1,\ldots, o_r $.
Due to ``no'' \ssc{partition} instance, we have $c_1(B_1 \cap \bigcup_{j\in[r]} o_j) < T$, implying $c_2(B_1)<\frac{3}{2}K + T$.
Accordingly for agent 2, she receives cost larger than $\frac{T}{2}$ from the assignment of $o_1,\ldots,o_r$, and thus, $c_2(B_2\setminus \{ o_{r+1}\}) > \frac{3}{2}K + T > c_2(B_1)$, violating the condition of EF1.
Therefore, no EF1 allocation can have social cost no greater than $W =\frac{13}{2}K+2T$.
\end{proof}

En-route, we also prove that no polynomial time algorithm guarantees a bounded multiplicative approximation factor.

\begin{theorem}\label{thm:hard-multi-ef1}
    For $n\geq 4$, there is no polynomial time algorithm having a bounded multiplicative approximation factor for SC-EF1 unless P = NP.
\end{theorem}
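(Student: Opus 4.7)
The plan is to strengthen the reduction of Theorem~\ref{thm:hard-ef1-sc} so that the multiplicative ratio between the Yes-instance and No-instance social costs tends to infinity as a scaling parameter $K$ grows. Since any polynomial-time algorithm with a bounded multiplicative approximation factor would distinguish those two cases, this is enough to establish NP-hardness.

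I would reduce from Partition-into-$(n-2)$-sets, which remains NP-complete whenever $n-2\ge 2$; this is exactly where the condition $n\ge 4$ enters the argument. The construction would use $n-2$ "balancing agents" (playing the role of agents $1,\ldots,n-1$ in the original reduction) together with $2$ "sink agents" (replacing the single agent $n$). The $q$ partition items would again have cost $\frac{p_j}{K}$ for every balancing agent, but now negligible cost for each sink agent. A small, constant number of heavy items, each of cost $\Theta(T)$ for the balancing agents and cost $0$ for the sinks, would then be included, along with a few dummy items tailored to keep the two sink agents in EF1-balance with one another and with the balancing agents.

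In a Yes instance of Partition-into-$(n-2)$-sets, the $o_j$'s can be distributed so that each balancing agent incurs cost exactly $T/K$, while the heavy items are split between the sink agents at no cost to themselves; the dummy items absorb any residual envy between the sinks. After normalizing each agent's total cost to $1$, the resulting EF1 allocation has social cost $\Theta(1/K)$. In a No instance, one would argue that any attempt to allocate the partition items among the balancing agents under EF1 forces one of the heavy items onto a balancing agent, pushing the social cost to $\Theta(1)$ after normalization; otherwise a Partition-into-$(n-2)$-sets solution could be recovered from the allocation, contradicting the No assumption.

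The main obstacle will be the No-case analysis together with the exact design of the sink/dummy substructure. The dummies must simultaneously (i) allow the two sinks to jointly absorb all heavy items while preserving EF1 in the Yes case, and (ii) prevent the balancing agents from offloading partition items onto the sinks to sidestep the balance requirement in the No case. Once the numerical parameters are calibrated so these two conditions hold, the ratio of the No social cost to the Yes social cost is $\Theta(K)$, and letting $K\to\infty$ rules out any polynomial-time algorithm with a bounded multiplicative approximation factor for SC-EF1 unless P $=$ NP.
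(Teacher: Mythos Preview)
Your No-case argument has a genuine gap. Once the sink agents are given cost $0$ on the heavy items and negligible cost on the partition items, their EF1 constraints become vacuous: with cost essentially zero they envy no one. The only binding constraints are then EF1 among the balancing agents, and since those agents have identical cost functions on the partition items, EF1 among them reduces to EQ1, which enforces only \emph{approximate} balance (up to one item), never exact balance. Hence even in a No instance you can run, say, longest-processing-time on the $n-2$ balancing agents to obtain an EF1 split of $\{o_1,\dots,o_q\}$ and hand all heavy items to the sinks at zero cost; the result is EF1 with social cost $\Theta(1/K)$, indistinguishable from the Yes case. The step ``otherwise a Partition-into-$(n-2)$-sets solution could be recovered from the allocation'' is precisely where the plan breaks: an EF1 split of the partition items is not an exact partition, and dummy items that only regulate the sinks cannot repair this, because the sinks are not the bottleneck.

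The paper avoids the trap by scaling in the opposite direction. It reduces from ordinary \textsc{Partition} with four agents (padding with copies of agent~$4$ for larger $n$), keeps the partition items at their native scale $p_j$ for agents $1,2,3$, and crucially gives the absorbing agent~$3$ cost exactly $T$ (not $0$) on each of the two heavy items, while agents $1,2,4$ pay $K$ for them. Because $c_3$ agrees with $c_1,c_2$ on the partition items, agent~$3$'s EF1 condition after dropping one heavy item reads $T\le c_3(B_j)=c_1(B_j)$ for $j\in\{1,2\}$, forcing an exact partition in any low-cost EF1 allocation. The unbounded ratio then comes from inflating the non-absorber costs by $K$, not from shrinking the partition items. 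The idea you are missing is that the agent who absorbs the heavy items must retain a calibrated \emph{positive} cost on them, tied to the partition target and to the balancing agents' view of the partition items; setting that cost to zero removes the only mechanism that distinguishes exact from approximate balance.
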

\begin{proof}
    We present the reduction from the \ssc{partition} problem. We will prove the statement for four agents and then explain how to extend the proof to any $n\geq 4$ agents.
    Given an arbitrary \ssc{partition} instance, we construct a fair division instance $\cI$ of 4 agents and a set $\{o_1,\ldots,o_{r+3} \}$ of $r+3$ items. The cost functions of agents are shown in Table \ref{tab:hard-multi-ef1} where $K$ is a sufficiently large number, i.e., $K \gg T$. Cost functions are normalized as for any $i=\in [4]$, $\sum_{j=1}^{r+3}c_i(o_j)=3K+2T$.

    \begin{table}[htbp]
  \centering
  \caption{The constructed fair division instance in Theorem \ref{thm:hard-multi-ef1}}
  \label{tab:hard-multi-ef1}
    \begin{tabular}{|c|c|ccccccc|}
    \hline
    \multicolumn{2}{|c|}{\multirow{2}[4]{*}{$c_i(o)$}} & \multicolumn{7}{c|}{Items} 
    \\
\cline{3-9}    \multicolumn{2}{|c|}{} & $o_1$ & $ o _2$ & $\cdots$ & $o_r$ & $o_{r+1}$ & $o_{r+2}$ & $o_{r+3}$ 
\\
    \hline
    \multirow{4}[2]{*}{Agents} & $1$ & $p_1$ & $ p_2$ & $\cdots$ & $ p_r$ & $  K$ & $ K$   & $K$ \bigstrut[t]\\
          & $2$   & $p_1$ & $ p_2$ & $\cdots$ & $ p_r$ & $ K$ & $ K$ & $K$ \\
    & $3$ & $p_1$ & $ p_2$ & $\cdots$ & $ p_r$ & $T$ & $T$ & $3K-2T$ \\
    & 4 & $ \frac{K+2T}{r}$ & $ \frac{K+2T}{r}$ & $\cdots$  & $ \frac{K+2T}{r}$ & $K$ & $K$ & $0$ \\
    \hline
    \end{tabular}%
\end{table}%

Suppose we have a ``yes'' \ssc{partition} instance and let $I_1$ and $I_2$ be a solution. We now consider the allocation $A=(A_1, A_2,A_3,A_4)$ where $A_1 = \bigcup_{j\in I_1} o_j$, $A_2 = \bigcup_{j\in I_2} o_j$, $A_3 = \{ o_{r+1}, o_{r+2} \}$ and $A_4 = \{ o_{r+3} \}$.
It is not hard to verify that $A$ is an EF1 allocation and has social cost $\SC(A) = 4T$, the minimum social cost of an allocation.

We now prove the other direction. Suppose we have a ``no'' \ssc{partition} instance. Let $B$ be an EF1 allocation. We will prove that $\SC(B) = \Omega(K)$. For a contradiction, assume $\SC(B) = o(K)$.
Then in allocation $B$, both $o_{r+1},o_{r+2}$ must be assigned to agent 3 and moreover $o_{r+3}$ must be allocated to agent 4.
Thus, we have $B_1\cup B_2 \subseteq \bigcup_{ j \in [r]} o_ j $. Due to ``no'' \ssc{partition} instance, we have $\min\{ c_1(B_1), c_2(B_2) \} < T$, and hence, $\min_{o\in B_3} c_3(B_3\setminus \{o \}) = T > \min \{ c_3(B_1), c_3(B_2) \}$.
Thus, agent 3 does not satisfy EF1. In other words, in an EF1 allocation, the union bundle of agents 1 and 2 should contain at least one item from $\{ o_{r+1}, o_{r+2}, o_{r+3} \}$, implying social cost of at least $K$.

Then it is NP-hard to approximate SC-EF1 with a multiplicative factor of $\frac{K}{4T}$ that tends to $+\infty$ when $K$ goes to $+\infty$. 
To extend to any constant $n\geq 4$ agents, one can add $n-4$ copies of agent 4.
\end{proof}



Next, we consider the case of two agents and present an additive approximation scheme for SC-EF1.
In particular, we will
establish an ``equivalence'' relationship between the allocations of goods and the allocations of chores when there are two agents.
When allocating goods to two agents, an FPTAS (multiplicative approximation) for SW-EF1 exists \cite{DBLP:journals/corr/abs-2205-14296}. We shall utilize their results and propose an additive approximation scheme for SC-EF1 by connecting SW-EF1 to our problem.
Denote by $\OPT^{EF1}$ the social cost of the optimal EF1 allocation. 

\begin{theorem}\label{thm:ef1-ptas}
    Let $\epsilon>0$ be an accuracy parameter.
    Given an instance of SC-EF1 with two agents and $m$ items,
    the algorithm runs in time $poly(m,\frac{1}{\epsilon})$ and computes an EF1 allocation with a social cost of at most $\OPT^{EF1}+\epsilon$.
\end{theorem}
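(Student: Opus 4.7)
The plan is to reduce SC-EF1 on two chore agents to SW-EF1 on two goods agents and then invoke the known FPTAS of \cite{DBLP:journals/corr/abs-2205-14296}. Given the chore instance $\cI=\langle N,O,\{c_i\}\rangle$ with $|N|=2$, I construct a goods instance $\cI'$ on the same item set by setting valuations $v_i(o):=c_i(o)$, so that $v_i(O)=1$ by normalization.

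The first key observation is that the swap map $A=(A_1,A_2)\mapsto B=(A_2,A_1)$ is a bijection between chore allocations of $\cI$ and goods allocations of $\cI'$ that preserves EF1. Indeed, the chore EF1 condition of $A$ for agent $1$ against agent $2$ says that some $o\in A_1$ satisfies $c_1(A_1\setminus\{o\})\le c_1(A_2)$; substituting $B_2=A_1$, $B_1=A_2$ and $v_1=c_1$ rewrites this as $v_1(B_2\setminus\{o\})\le v_1(B_1)$, which is precisely the goods EF1 condition of $B$ for agent $1$ against agent $2$. The case for agent $2$ is symmetric, and the argument works in both directions.

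The second observation is the conservation identity
\[
\SC(A)+\SW(B) \;=\; v_1(O)+v_2(O) \;=\; 2
\]
for every matched pair $(A,B)$. Writing $\OPT'$ for the maximum social welfare over all EF1 goods allocations of $\cI'$, the bijection therefore gives $\OPT^{EF1}+\OPT'=2$. I now run the FPTAS of \cite{DBLP:journals/corr/abs-2205-14296} on $\cI'$ with accuracy parameter $\epsilon'=\epsilon/2$; in time $\mathrm{poly}(m,1/\epsilon)$ it returns an EF1 goods allocation $B$ with $\SW(B)\ge(1-\epsilon/2)\,\OPT'$. I output $A=(B_2,B_1)$, which is EF1 by the bijection and satisfies
\[
\SC(A) \;=\; 2-\SW(B) \;\le\; 2-(1-\epsilon/2)(2-\OPT^{EF1}) \;=\; \epsilon+(1-\epsilon/2)\,\OPT^{EF1} \;\le\; \OPT^{EF1}+\epsilon,
\]
where the last inequality uses $\OPT^{EF1}\ge 0$.

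The main obstacle is the reduction itself: one must verify the EF1 bijection carefully for both agents in both directions, and confirm that the FPTAS of \cite{DBLP:journals/corr/abs-2205-14296} applies to our constructed goods instance without additional structural or positivity assumptions that the transformation might violate. Once these are in place, the additive $\epsilon$ guarantee follows directly from the $(1-\epsilon/2)$ multiplicative guarantee on the goods side together with the fact that $v_1(O)+v_2(O)=2$ is a constant.
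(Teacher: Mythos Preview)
Your proposal is correct and follows essentially the same approach as the paper: both reduce to the goods SW-EF1 problem via the bundle-swap bijection, invoke the FPTAS of \cite{DBLP:journals/corr/abs-2205-14296} with accuracy parameter $\epsilon/2$, and use the conservation identity $\SC(A)+\SW(B)=2$ to convert the multiplicative guarantee into an additive one. The only cosmetic difference is that the paper bounds the final expression using $\OPT'\le 2$ while you equivalently use $\OPT^{EF1}\ge 0$.
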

We first introduce terminologies of the goods setting. In the allocations of goods, there is a set of  $O$ of items (yielding positive value) to be allocated to a set of agents $N$. Each agent $i\in N$ is associated with an additive valuation function $v_i$ and $v_i(O)=1$. 
To distinguish between instances for goods and instances for chores, we use $\cI^{+}$ and $A^{+}$ (resp. $\cI^{-}$ and $A^{-}$)  to refer to the instance and the allocation for goods (resp. for chores).
An allocation $A^{+}=(A^{+}_1,\ldots, A^{+}_n)$ of goods is EF1 if for any $i,j\in N$, there exists $o \in A^{+}_j$ such that $v_i(A^{+}_i) \geq v_i(A^{+}_j\setminus \{o\})$. The \emph{social welfare} of allocation $A^{+}$ is the total value of agents. Denote by $\SW(A^+)$ the social welfare of $A^+$ and by SW-EF1 the problem of maximizing social welfare subject to EF1 constraint for goods.

When allocating goods to two agents, an FPTAS (multiplicative approximation) for SW-EF1 exists \cite{DBLP:journals/corr/abs-2205-14296}. We shall utilize their results and propose an additive approximation scheme for SC-EF1 by connecting SW-EF1 to our problem when there are two agents.

\begin{lemma}\label{lem:sc-ef1-ak1}
    Given a goods instance $\cI^{+}=\langle \{1,2\}, O, \{v_i\} \rangle$, suppose $A^{+}=(A^{+}_1,A^{+}_2)$ is an allocation maximizing social welfare subject to (goods) EF1 constraint. Considering the chores instance $\cI^{-}=\langle \{1,2\}, O, \{c_i\} \rangle $ where $c_i(o)=v_i(o),\forall o\in O, \forall i =1,2$,
    allocation $A^{-} = (A^{+}_2, A^{+}_1)$ minimizes social cost subject to (chores) EF1 constraint.
\end{lemma}
\begin{proof}
    Let $B^{+}=(B^+_1,B^+_2)$ be an arbitrary EF1 allocation of $\cI^+$. Construct $B^-=(B^-_1, B^-_2)$, an allocation of instance $\cI^-$, by letting $B^-_1=B^+_2$ and $B^-_2=B^+_1$. Then we have
    $
    \SC(B^-) = c_1(B^-_1)+ c_2(B^-_2)= v_1(B^+_2)+ v_2(B^+_1) = 2-\SW(B^+)
    $; the second equality transition is because $c_i(o)=v_i(o),\forall o\in O, \forall i =1,2$ and the third equality transition is because $v_i(O)=1$ for all $i$.
    We then prove that allocation $B^-$ is also EF1. Since goods allocation $B^+$ is EF1, there exists $o'\in B^+_2$ such that $v_1(B^+_1) \geq v_1(B^+_2\setminus \{o'\})$, equivalent to $c_1(B^+_1) \geq c_1(B^+_2\setminus \{o'\})$ as $c_i(\cdot)=v_i(\cdot)$.
    The latter inequality translates to $c_1(B^-_1\setminus\{o'\}) \leq c_1(B^-_2)$ with $o'\in B^-_1$, indicating that agent 1 satisfies (chores) EF1 in allocation $B^-$. By similar arguments, agent 2 also satisfies EF1 and hence $B^-$ is EF1. 
    We remark that the reverse implication is also true, that is, given an arbitrary EF1 allocation $B^-$ of $\cI^-$, the allocation $B^+$ achieved by swapping bundles of two agents is EF1 and moreover $\SC(B^-)+\SW(B^+) = 2$.

    The above argument proves that allocation $A^-$ is EF1. The remaining is to show that $A^-$ minimizes social cost among all EF1 allocations. For a contradiction, let $D^-=(D^-_1, D^-_2)$ be an EF1 allocation of $\cI^-$ with social cost $\SC(D^-)< \SC(A^-)$.
    Then consider allocation $D^+=(D^-_2, D^-_1)$ of $\cI^+$. By the above argument, $D^+$ is EF1. Note that $\SW(D^+)=2-\SC(D^-)>2-\SC(A^-)=\SC(A^+)$, violating the fact that $A^+$ maximizes social welfare among all EF1 allocations.
    Therefore, allocation $A^-$ minimizes social cost among all EF1 allocations of $\cI^-$.
\end{proof}

Now we are ready to present the additive approximation scheme for SC-EF1. Denote by $\OPT^{EF1}$ the social cost of the optimal EF1 allocation. 


\begin{proof}[Proof of Theorem \ref{thm:ef1-ptas}]
    There is a known FPTAS algorithm for SW-EF1 when $n=2$ \cite{DBLP:journals/corr/abs-2205-14296}, that is, for any instance $\cI^+$ with $m$ items and two agents and for $\epsilon'>0$, their algorithm runs in time $poly(m, \frac{1}{\epsilon'})$ and computes an EF1 allocation $A^+$ with $\SW(A^+) \geq (1-\epsilon')\OPT(\cI^+)$ where $\OPT(\cI^+)$ is the maximum social welfare of EF1 allocations for $\cI^+$.

    Given a chores instance $\cI^-=\langle2,O,\{c_i\} \rangle$, we construct a goods instance $\cI^+=\langle 2, O, \{v_i\} \rangle$ where $v_i(o)=c_i(o),\forall o\in O, \forall i=1,2$.
    Implement the FPTAS in \cite{DBLP:journals/corr/abs-2205-14296} with instance $\cI^+$ and parameter $\epsilon'=\frac{\epsilon}{2}$. Name the returned EF1 allocation $B^+$ that has social welfare $\SW(B^+) \geq (1-\epsilon')\OPT(\cI^+)$.
    Now let us consider allocation $B^-=(B^-_1,B^-_2)$ where $B^-_1=B^+_2$ and $B^-_2=B^+_1$. Then according to the proof of Lemma \ref{lem:sc-ef1-ak1}, allocation $B^-$ is EF1 and has a social cost of $2-\SW(B^+)$.
    Denote by $\OPT(\cI^-)$ the minimum social cost of EF1 allocations for $\cI^-$. Again by the proof of Lemma \ref{lem:sc-ef1-ak1}, we have $\OPT(\cI^-)+\OPT(\cI^+)=2$. Thus, we have the following,
    $$
    \begin{aligned}
    \SC(B^-) = 2-\SW(B^+) &\leq 2-(1-\epsilon')\OPT(\cI^+) \\
    & = 2-\OPT(\cI^+) + \frac{\epsilon}{2} \cdot \OPT(\cI^+) \\
    & \leq \OPT(\cI^-) + \epsilon,
        \end{aligned}
    $$
    where the equality transition is due to $\epsilon'=\frac{\epsilon}{2}$ and the last inequality transition is due to $\OPT(\cI^+) \leq 2$.

    As for the running time, since converting $\cI^-$ to $\cI^+$ takes linear time, the total running time of our algorithm is $poly(m,\frac{1}{\epsilon})$. 
\end{proof}

\section{Conclusion}

In this paper, we re-raised the discussion on the trade-off between fairness and system efficiency for the allocations of indivisible chores. 
We propose a new measurement, the cost of fairness, to quantify the social cost rise when requiring the allocations to be fair.
We found that an EQ1 or EF1 allocation with social cost at most 1
can be computed in polynomial time. 
However, with three or more agents, finding an {EF1 (or EQ1 respectively)} allocation with {social} cost smaller than $1/2$ (or $1/4$ respectively) plus the {social} cost of the optimal EQ1 (or EF1 respectively) allocation is NP-hard.
We complement these results with additive {approximation schemes} 
for the two-agent case. 
Regarding EQX, we prove that there exist instances in which any EQX allocation has social cost $n$ and finding an allocation with cost smaller than $n$ is NP-hard for any $n$. 

Our work uncovers many open problems and future research directions. 
First, in the current paper, we did not consider share-based fairness notions such as PROP1, PROPX, and approximate MMS. The price of PROP1 and PROPX is shown to be $n$ which is acceptable compared with infinity. It is proved in \cite{DBLP:conf/www/0037L022} that there is always a PROP1 and PROPX allocation with social cost of at most 1, which implies that the cost of PROP1 and PROPX is 1.
We leave a complete study of share-based fairness notions for future study.
Secondly, we only focused on the allocation of chores.
It is interesting to see whether the cost of fairness brings new insights to the trade-off between fairness and efficiency for goods. 

\section*{Acknowledgements}
This work is funded by the Hong Kong SAR Research Grants Council (No. PolyU 15224823) and the Guangdong Basic and Applied Basic Research Foundation (No. 2023A1515010592).

\newpage

\bibliographystyle{alpha}
\bibliography{sample.bib}

\end{document}